\documentclass[lettersize,journal]{IEEEtran}
\usepackage{amsmath,amsfonts}
\usepackage{algorithmic}
\usepackage{algorithm}
\usepackage{array}
\usepackage{textcomp}
\usepackage{stfloats}
\usepackage{url}
\usepackage{verbatim}
\usepackage{graphicx}
\usepackage{cite}
\usepackage{amsthm,amsmath,amssymb}
\usepackage{mathrsfs}
\usepackage{enumitem}
\usepackage{subcaption} 
\usepackage{times}
\usepackage{svg}
\usepackage{multirow}
\usepackage{amsthm}            % load amsthm for theorem environments
\usepackage{longtable}
\usepackage{booktabs}
\usepackage{dsfont}

\usepackage{tabularx}
\usepackage{booktabs}

\usepackage{xcolor}
\usepackage[
  colorlinks=false,
  pdfborder={0 0 1},
  citebordercolor={0 1 0},
  linkbordercolor={1 0 0},
  urlbordercolor={0 1 0}
]{hyperref}

\theoremstyle{plain}
\newtheorem{theorem}{Theorem}

\begin{document}

\title{Coupling-Aware RHS Beamforming for Wideband Multi-User Sum Rate Maximization}

\author{Liangshun Wu, Wen Chen*
\thanks{Liangshun Wu  and  Wen Chen are with the Department of Electronic Engineering, Shanghai Jiao Tong University, Shanghai, China (e-mail: wuliangshun@sjtu.edu.cn; wenchen@sjtu.edu.cn).  }
}

%\IEEEpubid{0000--0000/00\$00.00~\copyright~2021 IEEE}
% Remember, if you use this you must call \IEEEpubidadjcol in the second
% column for its text to clear the IEEEpubid mark.

\maketitle

\begin{abstract}
Wideband multi-user transmission assisted by reconfigurable holographic surfaces (RHSs) is fundamentally limited by mutual coupling effect among densely packed sub-wavelength radiation elements. This paper develops a coupling-aware wideband RHS model and an efficient joint beamforming framework to maximize the multi-user sum rate under practical feeder power and RHS excitation power constraints. We establish an electromagnetic equivalent model based on magnetic-dipole elements and a physically interpretable coupling decomposition into free space  near field coupling and guided surface wave coupling. For optimization, we employ a weighted minimum mean square error (WMMSE)-based block coordinate method with a closed-form digital precoder update and introduce a Jacobian-aided coupling consistent hologram update that preserves coupling sensitivity via a first-order surrogate while keeping the hologram subproblem convex and efficiently solvable by projected first-order methods. Meep experiments verify the correctness of the proposed coupling model, and the simulations for a 28~GHz, 1~GHz-bandwidth RHS downlink prove the effectiveness of Jacobian-aided WMMSE-based method.
\end{abstract}

\begin{IEEEkeywords}
RHS, Beamforming, Mutual Coupling, Sum Rate, WMMSE
\end{IEEEkeywords}

\section{Introduction}
\IEEEPARstart{R}{econfigurable}  holographic surfaces (RHSs) are a low cost alternative to large phased arrays for wideband high frequency links. With a few feeders generating a reference wave, densely packed sub wavelength meta elements modulate the field to form high gain beams using a large effective aperture, which is attractive at millimeter wave and terahertz bands. However, the large number of radiation elements makes the physical channel high dimensional, and pilot-based estimation becomes costly in wideband operation. This motivates beamforming oriented modeling \cite{10103817,9696209,11054266,11004435}, where an equivalent physics-based model maps feeder excitations and surface modulation to the multi-user channel across frequency. %,9696209,11054266,11004435

With element spacing as small as $1/4\lambda$ or less, dense packing leads to strong mutual coupling and the independent element assumption is no longer valid. Impedance-based coupling models widely used for phased arrays \cite{11274834}\cite{11274829}\cite{11123587} do not match the RHS excitation mechanism based on a feeder generated reference wave and surface modulation. Existing coupling-aware RHS models \cite{11175425} often rely on Green function terms without simple closed form expressions that directly capture the dependence on geometry, material parameters, and frequency, which is problematic in wideband systems where the coupling matrix must be recomputed for each subband. Large apertures cause near-/far-field coexistence (e.g., 30GHz, 256-element RHS has 80 m near field \cite{11195853}), which incurs the need for unified near-/far-field representation \cite{10819473,11195853,10901662,4232629,1632998,8798804}.

These electromagnetic and propagation effects are not only modeling issues but also fundamentally reshape the precoder and hologram optimization: the effective baseband channel depends on the RHS operator, which becomes frequency-dependent and strongly coupled across elements when mutual coupling is present. Consequently, the hologram profile and the digital precoders are intertwined through a nonlinear, wideband, and geometry or material-dependent mapping, making direct sum-rate maximization particularly challenging.

To handle multi-user interference, weighted minimum mean square error (WMMSE) provides an attractive route due to its tight  weighted sum rate (WSR) equivalence and efficient alternating updates. It was introduced for multi-user (MU) multiple-input multiple-output (MIMO) broadcast channel beamforming \cite{4712693} and later extended to multi-cell \cite{6399004,10778279} and interfering broadcast channels\cite{5756489}, yet not to hogographic MIMO in recent works\cite{11162320,11202841}. Moreover, with coupling-aware RHS modeling, the coupling matrix depends on holographic pattern implicitly through a coupling inverse, so the hologram update block is no longer a simple convex optimization problem under the standard freeze-the-operator approximation used in existing RHS optimizations \cite{11175425,11162320,11202841}. This approximation can be inaccurate under strong coupling and may lead to slow or unstable convergence, motivating a coupling-consistent update that explicitly accounts for the sensitivity of the coupled response to holographic pattern.

This paper addresses wideband RHS-assisted MU-MIMO transmission under mutual coupling and near-field propagation by developing a coupling-aware electromagnetic model, a unified channel representation, and a coupling-consistent joint beamforming algorithm. The main contributions are:
\begin{enumerate}[leftmargin=1.2em]
\item {Coupling-aware RHS equivalent model.}
We establish a mutual coupling aware electromagnetic equivalent model for RHSs based on an equivalent magnetic dipole description. 
The coupling matrix is explicitly characterized as a function of element geometry, medium parameters, and frequency, which supports subband-wise evaluation in wideband systems. 
To improve interpretability and calibration flexibility, we further decompose the coupling into free-space near-field coupling and surface-wave mediated coupling.

%\item {Unified near-/far-field channel representation via Weyl identity.}
%We develop a unified near- and far-field channel model using Weyl's identity, where a spherical wave is represented as an angular spectrum superposition of plane waves. 
%This yields a single formulation covering both regions and reduces basis mismatch for large apertures and short-range users.

\item {Constrained wideband sum-rate maximization with efficient WMMSE updates.}
We formulate a wideband multi-user sum-rate maximization over the holographic amplitude vector and the digital precoders across subbands under (i) a BS feeder sum-power constraint and (ii) an RHS excitation power constraint. 
We solve the resulting nonconvex coupled problem using a WMMSE-based block coordinate descent method. 
The digital precoder update admits a convex quadratic form with a Karush-Kuhn-Tucker (KKT) closed-form solution, where a single dual variable is efficiently found via bisection.

\item {Jacobian-aided coupling-consistent hologram update.}
To explicitly account for the nonlinear dependence of the coupled RHS response on the hologram, we propose a Jacobian-aided coupling-consistent hologram optimization step. 
By implicitly differentiating the coupled RHS operator,
we construct a first-order successive convex approximation (SCA) surrogate that retains the coupling feedback sensitivity term 
 rather than freezing the coupling inverse.
This leads to a convex quadratic hologram subproblem that can be solved efficiently by projected first-order methods, improving convergence stability and robustness in strong-coupling regimes.
\end{enumerate}

The paper is organized as follows. Section~II presents the wideband RHS model with mutual coupling and formulates the wideband multi-user rate maximization problem. Section~III derives the WMMSE solution. Section~IV provides simulations and results. Section~V concludes.

\section{ Modeling}
\subsection{RHS Beamforming in Wideband}
The total wideband bandwidth $B$ is divided into $U$ approximately flat subbands $\{\mathcal{W}_u\}_{u=1}^U$
with subband bandwidth $B_g \triangleq \frac{B}{U}$. Let $f_c$ denote the carrier frequency. The center frequency of the $u$-th subband is
$
f_u = f_c + \left(u-\frac{U+1}{2}\right) B_g,  u=1,\ldots,U .
$

Consider an RHS with $L$ feeders and $N$ radiation elements.
Let $\mathbf{r}_n$ denote the position vector of the $n$-th element, and $\mathbf{r}_l$ be that of the $l$-th feeder. Let
$
\mathbf{r}_n^{\,l} \triangleq \mathbf{r}_n-\mathbf{r}_l
$
be the distance vector from feeder $l$ to element $n$.

The object wave radiated toward user $k$ is modeled as
$
\Psi_{\mathrm{obj},k}(\mathbf{r}_n)
= \exp\!\left(-j\,\mathbf{k}_f(\theta_k,\varphi_k)\cdot \mathbf{r}_n\right),
$
where $\mathbf{k}_f(\theta_k,\varphi_k)$ is the free-space wavevector. 

The reference wave generated by feeder $l$ on the RHS surface is
$
\Psi_{\mathrm{ref},l}(\mathbf{r}_n^{\,l}; f_u)
=\exp\!\left(-j\,\mathbf{k}_{s,u}\cdot \mathbf{r}_n^{\,l}\right),
$
where $\mathbf{k}_{s,u}$ is the reference wave propagation
vector at subband center frequency $f_u$. 

Following the holographic interference principle, the interference term between the object wave (user $k$)
and the reference wave (feeder $l$) is
$
\Psi_{k,l}^{\mathrm{intf}}(\mathbf{r}_n)
=\Psi_{\mathrm{obj},k}(\mathbf{r}_n)\,\Psi_{\mathrm{ref},l}^{*}(\mathbf{r}_n^{\,l}; f_u).
$
where $\Psi_{\mathrm{ref},l}^{*}(\mathbf{r}_n^{\,l}; f_u)$ take the complex conjugate of $\Psi_{\mathrm{ref},l}(\mathbf{r}_n^{\,l}; f_u)$.
Then an amplitude-only hologram can be constructed as
$
m_{k,l}(\mathbf{r}_n)
=\frac{\Re\left\{\Psi_{k,l}^{\mathrm{intf}}(\mathbf{r}_n)\right\}+1}{2}\in[0,1].
$
where $\Re\!\left\{\Psi_{k,l}^{\mathrm{intf}}(\mathbf{r}_n)\right\}=\cos (\mathbf{k}_{s,u}\cdot \mathbf{r}_n^{\,l} - \mathbf{k}_f(\theta_k,\varphi_k)\cdot \mathbf{r}_n)$ denotes the real part of holographic pattern (see Fig. \ref{fig:rhs}).

For holographic pattern division multiple access (HDMA) \cite{deng2022hdma} superposition, the final normalized amplitude coefficient of element $n$ is
$m_n=\sum_{k=1}^{K}\sum_{l=1}^{L} a_{k,l}\,m_{k,l}(\mathbf{r}_n)$,
$a_{k,l}\ge 0$, $\sum_{k=1}^{K}\sum_{l=1}^{L} a_{k,l}=1$,
which guarantees $0\le m_n\le 1$.
Finally, define $D(\mathbf{m})=\mathrm{diag}(\mathbf{m})$ with $\mathbf{m}=[m_1,\ldots,m_N]^T$.

\begin{figure}[h]
\centering
\includegraphics[width=0.5\textwidth]{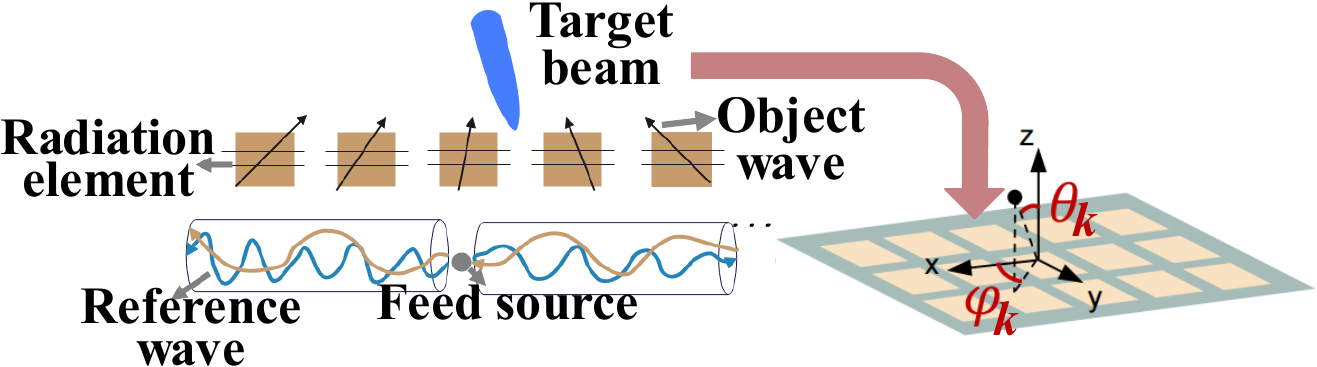}
\caption{Working principle of RHS.}
\label{fig:rhs}
\end{figure}

\subsection{Mutual Coupling }
\begin{figure*}[h]
\centering
\includegraphics[width=0.75\textwidth]{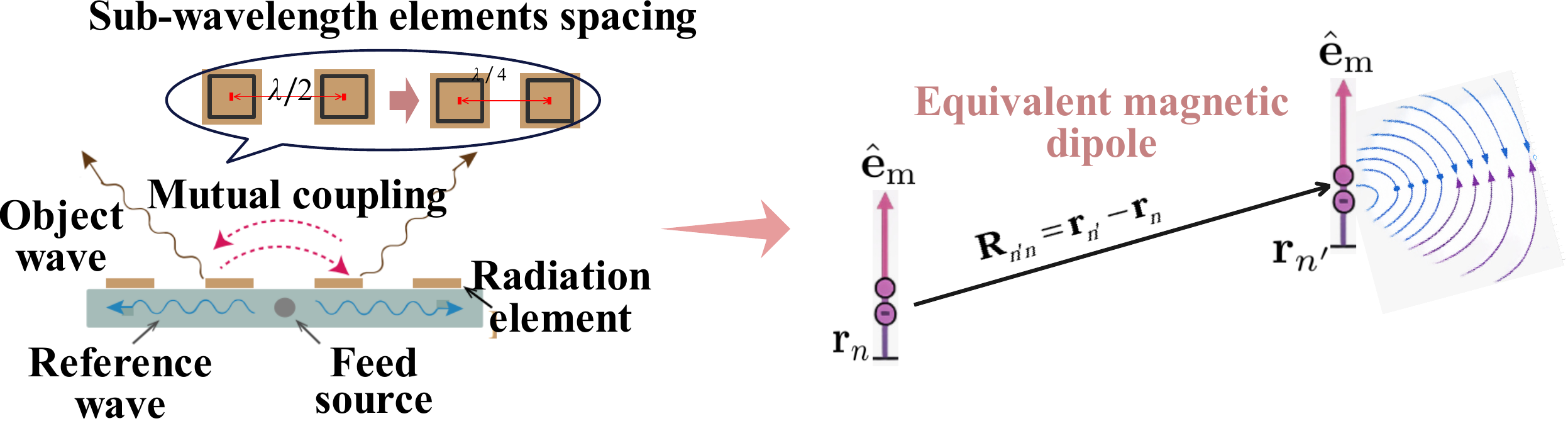}
\caption{Mutual coupling with equivalent dipole.}
\label{fig:coupling}
\end{figure*}
Each sub-wavelength RHS element is modeled as an equivalent magnetic dipole (see Fig. \ref{fig:coupling}) with a fixed orientation
$\hat{\mathbf{e}}_{\mathrm{m}}$. Let
$
\mathbf{p}_u \triangleq [p_{u,1},\ldots,p_{u,N}]^T \in \mathbb{C}^{N\times 1}
$
denote the complex dipole moment amplitudes at subband $u$ (scalar along $\hat{\mathbf{e}}_{\mathrm{m}}$).

Let $\mathbf{g}=[g_1,\ldots,g_N]^T$ be the element-wise complex polarizability, i.e., modulation coefficients, and
$
D(\mathbf{g})\triangleq \operatorname{diag}(\mathbf{g})$, $g_n = m_n e^{j\phi_n}.$
Amplitude-only holography corresponds to $\phi_n=0$, which means that it is approximately constant, hence $\mathbf{g}=\mathbf{m}$.

Let $\boldsymbol{\Xi}_u\in\mathbb{C}^{N\times N}$ be the mutual coupling matrix at frequency $f_u$, with $[\boldsymbol{\Xi}_u]_{n,n}=0$ absorbing self-terms into $\mathbf{g}$. Using an induced-dipole
superposition model,
$
\mathbf{p}_u = D(\mathbf{g})\left(\mathbf{f}_u^{\mathrm{ref}} + \boldsymbol{\Xi}_u \mathbf{p}_u\right),
$
Define the reference wave feeding matrix $\mathbf{F}_u\in\mathbb{C}^{N\times L}$ as
$
[\mathbf{F}_u]_{n,l} \triangleq \Psi_{\mathrm{ref},l}(\mathbf{r}_n^{\,l}; f_u),
$
the feeder-generated reference excitation is
$
\mathbf{f}_u^{\mathrm{ref}} = \mathbf{F}_u\mathbf{q}_u,
$
and $\mathbf{q}_u\in\mathbb{C}^{L\times 1}$ is the feeder input at subband $u$
(e.g., $\mathbf{q}_u = V_u \mathbf{s}_u$ with digital precoder $V_u$ and data symbols $\mathbf{s}_u$).
Thus,
$
\mathbf{p}_u
= \left(I_N - D(\mathbf{m})\boldsymbol{\Xi}_u\right)^{-1} D(\mathbf{m}) \mathbf{F}_u \mathbf{q}_u .
$

Therefore, the coupling-aware RHS equivalent matrix (mapping from feeder input to element dipole response)
is defined as
$
M_u(\mathbf{m})
\triangleq \left(I_N - D(\mathbf{m})\boldsymbol{\Xi}_u\right)^{-1} D(\mathbf{m}) \mathbf{F}_u
\in \mathbb{C}^{N\times L}.
$
Optionally define
$
C_u(\mathbf{m}) \triangleq \left(I_N - D(\mathbf{m})\boldsymbol{\Xi}_u\right)^{-1},
$
we have 
$
M_u(\mathbf{m}) = C_u(\mathbf{m})D(\mathbf{m})\mathbf{F}_u.
$

To reflect different coupling mechanisms, decompose
$
\boldsymbol{\Xi}_u = \boldsymbol{\Xi}_{\mathrm{fs},u} + \boldsymbol{\Xi}_{\mathrm{wg},u},
$
where $\boldsymbol{\Xi}_{\mathrm{fs},u}$ accounts for near-field coupling through free space (or a uniform medium),
and $\boldsymbol{\Xi}_{\mathrm{wg},u}$ accounts for guided or surface wave mediated coupling along the structure.

\subsubsection{Free-space near-field coupling $\boldsymbol{\Xi}_{\mathrm{fs},u}$}
Let $\mathbf{R}_{n'n}=\mathbf{r}_{n'}-\mathbf{r}_n$, $R_{n'n}=\|\mathbf{R}_{n'n}\|$, and
$\widehat{\mathbf{R}}_{n'n}=\mathbf{R}_{n'n}/R_{n'n}$. Using a homogeneous-medium approximation, the magnetic
field Green's function for a unit magnetic dipole oriented along $\hat{\mathbf{e}}_{\mathrm{m}}$ can be written as \eqref{eq:H},
\begin{figure*}
\begin{equation}
\begin{aligned}
\mathbf{H}(\mathbf{r}_{n'},\mathbf{r}_n; f_u)
= \frac{e^{-j k_u R_{n'n}}}{4\pi}
\Bigg[
k_u^{2}\Big(\widehat{\mathbf{R}}_{n'n}\times(\hat{\mathbf{e}}_{\mathrm{m}}\times\widehat{\mathbf{R}}_{n'n})\Big) 
 + \left(\frac{1}{R_{n'n}^2}-\frac{j k_u}{R_{n'n}}\right)
\left(3\widehat{\mathbf{R}}_{n'n}(\widehat{\mathbf{R}}_{n'n}\cdot \hat{\mathbf{e}}_{\mathrm{m}})-\hat{\mathbf{e}}_{\mathrm{m}}\right)
\Bigg]\frac{1}{R_{n'n}} .
\end{aligned}
\label{eq:H}
\end{equation}
\noindent\rule{\textwidth}{0.4pt}
\end{figure*}
where $k_u$ is the wavenumber of $f_u$:
$
k_u=2\pi f_u \sqrt{\mu\epsilon}
$ with medium parameters $(\mu,\epsilon)$.
In free space, it reduces to
$
k_u=\frac{2\pi f_u}{c_0}.
$

Eq. \eqref{eq:H} gives the homogeneous-medium Green's-function expression of the magnetic field
generated at $\mathbf{r}_{n'}$ by a {unit} magnetic dipole located at $\mathbf{r}_n$ and oriented along
$\hat{\mathbf{e}}_{\mathrm{m}}$. The prefactor $\exp(-j k_u R_{n'n})/(4\pi)$ represents the spherical-wave
propagation from the source element to the observation element, where $R_{n'n}$ is the separation distance
and $k_u$ is the (possibly medium-dependent) wavenumber at subband center frequency $f_u$. The remaining
vector term inside the brackets captures both the directional radiation pattern imposed by the dipole
orientation and the {distance-dependent} near-/intermediate-/far-field behaviors. In particular, the
first term $k_u^{2}\big(\widehat{\mathbf{R}}_{n'n}\times(\hat{\mathbf{e}}_{\mathrm{m}}\times\widehat{\mathbf{R}}_{n'n})\big)$
corresponds to the radiating  field component: using the vector identity
$\widehat{\mathbf{R}}\times(\hat{\mathbf{e}}_{\mathrm{m}}\times\widehat{\mathbf{R}})=
\hat{\mathbf{e}}_{\mathrm{m}}-(\hat{\mathbf{e}}_{\mathrm{m}}\!\cdot\!\widehat{\mathbf{R}})\widehat{\mathbf{R}}$,
it can be seen that this term extracts the component of the dipole orientation perpendicular to the
propagation direction $\widehat{\mathbf{R}}_{n'n}$, consistent with the fact that radiated fields are
transverse in the far field. With the outer factor $1/R_{n'n}$, this radiating contribution scales as
$\mathcal{O}(1/R_{n'n})$ and dominates at sufficiently large separations. The second term
$\big(\frac{1}{R_{n'n}^2}-\frac{j k_u}{R_{n'n}}\big)\big(3\widehat{\mathbf{R}}_{n'n}
(\widehat{\mathbf{R}}_{n'n}\!\cdot\!\hat{\mathbf{e}}_{\mathrm{m}})-\hat{\mathbf{e}}_{\mathrm{m}}\big)$
collects the induction and quasi-static components of the dipole field, whose magnitudes scale as
$\mathcal{O}(1/R_{n'n}^2)$ and $\mathcal{O}(1/R_{n'n}^3)$ after including the outer $1/R_{n'n}$ factor. These
terms are responsible for strong near-field interactions when elements are closely spaced (e.g.,
sub-wavelength spacing), which is precisely the regime where mutual coupling becomes non-negligible in RHS.
Moreover, the angular factor
$3\widehat{\mathbf{R}}_{n'n}(\widehat{\mathbf{R}}_{n'n}\!\cdot\!\hat{\mathbf{e}}_{\mathrm{m}})-\hat{\mathbf{e}}_{\mathrm{m}}$
(the ``3'' is explained in Appendix \ref{appA}) encodes the characteristic dipolar directional dependence.

A scalarized coupling coefficient (projected onto the dipole orientation) can be defined as
\begin{equation}
\left[\boldsymbol{\Xi}_{\mathrm{fs},u}\right]_{n',n}   = \hat{\mathbf{e}}_{\mathrm{m}}^{\top}\mathbf{H}(\mathbf{r}_{n'},\mathbf{r}_n; f_u),
 n'\neq n; 
\left[\boldsymbol{\Xi}_{\mathrm{fs},u}\right]_{n,n}=0.
\end{equation}
and define $  \xi_\mathrm{fs} \triangleq \sum{\left[\boldsymbol{\Xi}_{\mathrm{fs},u}\right]_{n',n}} / N$.

\subsubsection{Surface-wave mediated coupling $\boldsymbol{\Xi}_{\mathrm{wg},u}$}

A practical parameterized model is
\begin{equation}
\left[\boldsymbol{\Xi}_{\mathrm{wg},u}\right]_{n',n} =
\begin{cases}
\rho_u^{+}\exp\!\left(-(\alpha_{\mathrm{wg},u}+j\beta_{\mathrm{wg},u})\, s_{n'n}\right), & n'>n,\\
\rho_u^{-}\exp\!\left(-(\alpha_{\mathrm{wg},u}+j\beta_{\mathrm{wg},u})\, s_{n'n}\right), & n'<n,\\
0, & n'=n,
\end{cases}
\end{equation}
where $s_{n'n}$ is the equivalent guided propagation distance between elements $n$ and $n'$ along the feeding direction
(e.g., $s_{n'n}=|n'-n|d$ for a uniform linear array with spacing $d$). The coefficients $\rho_u^{+}$ and $\rho_u^{-}$ represent the coupling strength in the forward ($n'>n$) and reverse ($n'<n$) directions, respectively, which enables modeling possible asymmetry. $\exp(-\alpha_{\mathrm{wg},u}s_{n'n})$ captures the amplitude attenuation caused by guided-mode loss and leakage, while $\exp(-j\beta_{\mathrm{wg},u}s_{n'n})$ accounts for the phase accumulation of the guided wave, where $\alpha_{\mathrm{wg},u}$ and $\beta_{\mathrm{wg},u}$ are the effective attenuation and phase constants at subband center frequency $f_u$.
Define $  \xi_\mathrm{wg} \triangleq \sum{\left[\boldsymbol{\Xi}_{\mathrm{wg},u}\right]_{n',n}} / N$.

\subsection{Multi-User Sum Rate Maximization Problem}
\begin{figure}[h]
\centering
\includegraphics[width=0.5\textwidth]{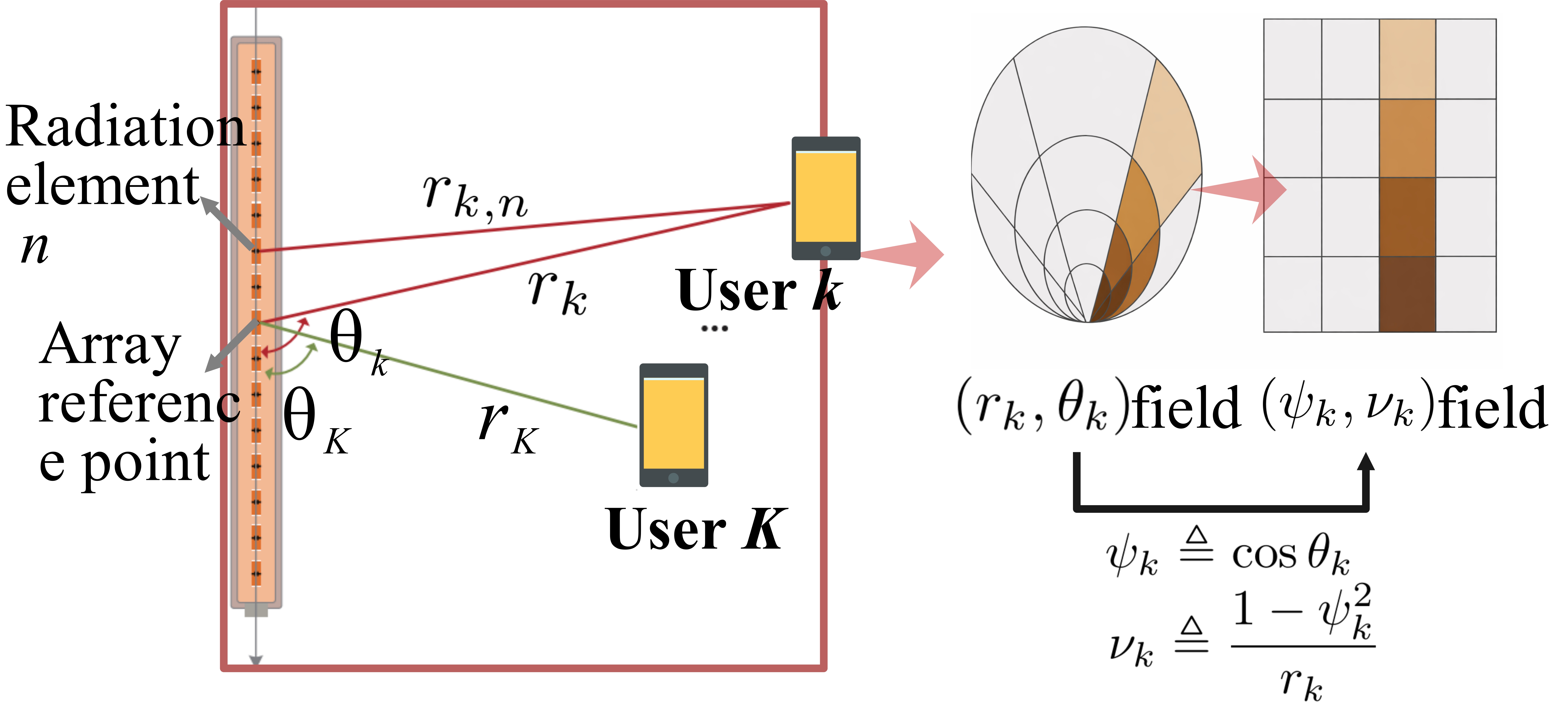}
\caption{Geometric relationship and $(r_k,\theta_k)$ to $(\psi_k,\nu_k)$ field conversion.}
\label{fig:geo}
\end{figure}

According to the geometric relationship (Fig.~\ref{fig:geo}), consider a uniform linear array (ULA) with inter-element spacing $d$.
Let the $n$-th element have relative position index $\delta_n$ (e.g., $\delta_n = n-\frac{N+1}{2}$),
so its coordinate along the array is $x_n=\delta_n d$.
Let user $k$ be located at distance $r_k$ from the array reference point, and define $\theta_k$ as
the angle such that the directional cosine w.r.t. the array axis is
$
\psi_k \triangleq \cos\theta_k.
$
Then the distance from user $k$ to element $n$ is 
$
r_{k,n} = \sqrt{r_k^2 + (\delta_n d)^2 - 2 r_k (\delta_n d)\psi_k } .
$
Using a second-order Taylor expansion for $|\delta_n d|\ll r_k$, we obtain
$
r_{k,n}
\approx r_k - \delta_n d \psi_k + \frac{\delta_n^2 d^2}{2 r_k}\left(1-\psi_k^2\right) 
= r_k - \delta_n d \psi_k + \frac{\delta_n^2 d^2}{2}\nu_k ,
$
where we define the curvature coefficient
$
\nu_k \triangleq \frac{1-\psi_k^2}{r_k}.
$
The far-field corresponds to $r_k\to\infty$, hence $\nu_k\to 0$ (see Fig. \ref{fig:near}).
\begin{figure}[h]
\centering
\includegraphics[width=0.28\textwidth]{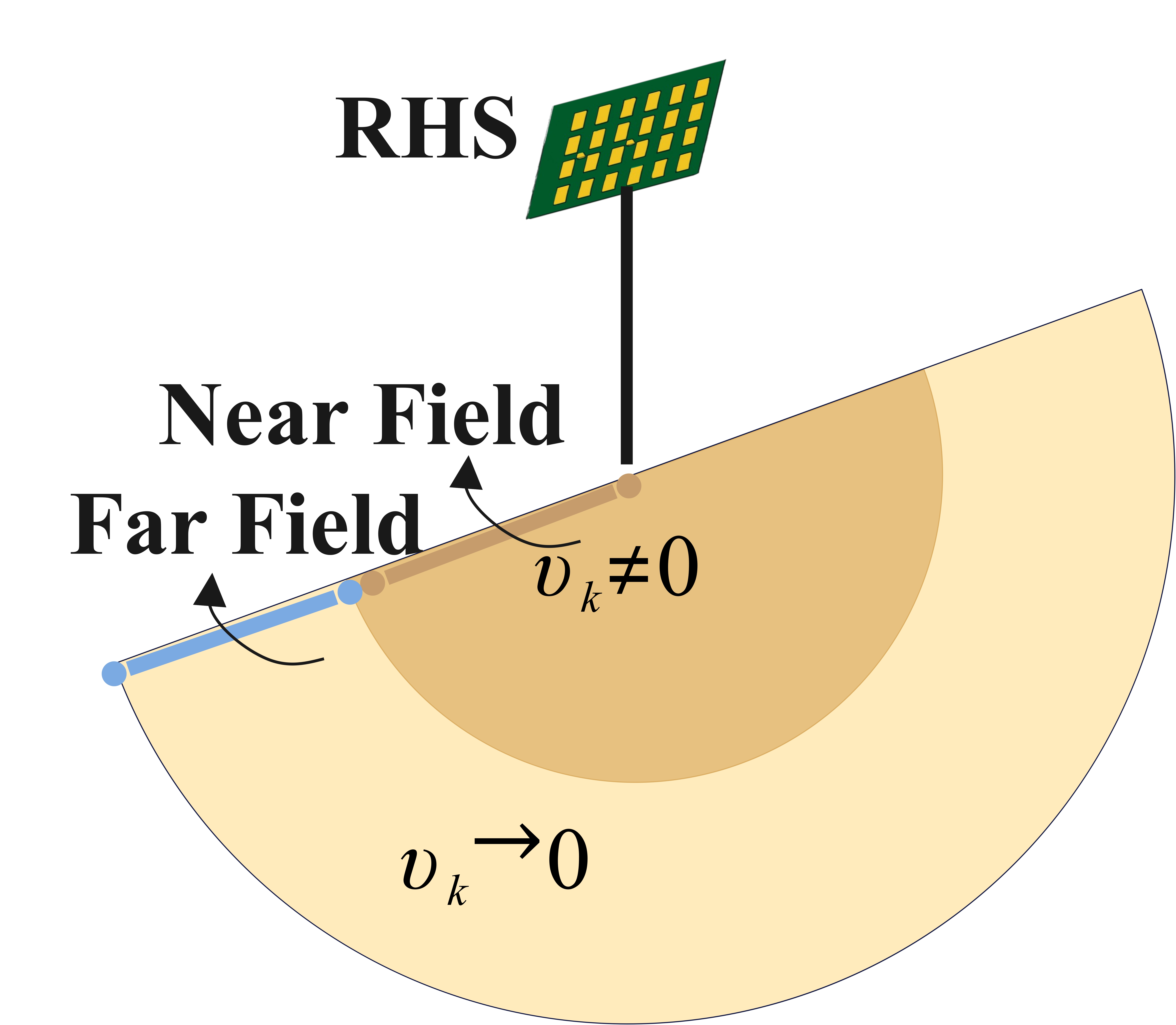}
\caption{Unified near- and far- filed representation with $(\psi_k,\nu_k)$: near-field, $\nu_k\neq 0$; far-field, $\nu_k\to 0$.}
\label{fig:near}
\end{figure}
\begin{figure*}[h]
\centering
\includegraphics[width=0.85\textwidth]{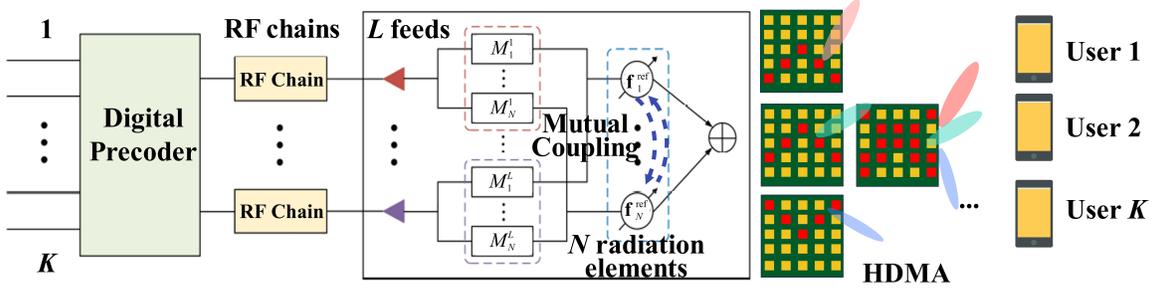}
\caption{RHS beamforming for multi users.}
\label{fig:rhs_beamforming}
\end{figure*}

Define the relative path length difference
$
\Delta r_{k,n} \triangleq r_{k,n}-r_k \approx - \delta_n d \psi_k + \frac{\delta_n^2 d^2}{2}\nu_k .
$
The row array response vector for the $u$-th subband is written as
$
\mathbf{b}_u(\psi_k,\nu_k)
= \frac{1}{\sqrt{N}}
\left[
e^{-j k_u \Delta r_{k,1}},\ldots,e^{-j k_u \Delta r_{k,N}}
\right]
\in\mathbb{C}^{1\times N}.
$
The LoS path gain is modeled as:
$
\beta_{k,u}
=
\frac{\lambda_u}{4\pi r_k}
\cdot
\exp\!\left(-\frac{\kappa_{\mathrm{abs}}(f_u) r_k}{2}\right)
\cdot
\exp\!\left(-j k_u r_k\right),
\lambda_u=\frac{c_0}{f_u},
$
where $\frac{\lambda_u}{4\pi r_k}$ denotes the spreading loss of amplitude and $\exp\left(-\frac{\kappa_{\mathrm{abs}}(f_u) r_k}{2}\right)$ denotes the molecular absorption loss of amplitude. $\kappa_{\mathrm{abs}}(f)$ is the molecular absorption coefficient of mmWave or Teraherz wave. (If one wants a more accurate near-field
amplitude variation, $r_k$ can be replaced by $r_{k,n}$ element-wise; here we use the common-amplitude approximation.)
The LoS channel row vector is then
$\mathbf{h}_{k,u}=\beta_{k,u}\sqrt{N}\,\mathbf{b}_u(\psi_k,\nu_k)\in\mathbb{C}^{1\times N}.$

The digital precoding matrix for the $u$-th subband is
$
V_u=\left[\mathbf{v}_{1,u},\ldots,\mathbf{v}_{K,u}\right]\in\mathbb{C}^{L\times K}.
$
Let the transmit symbol vector satisfy
$
\mathbb{E}\!\left[\mathbf{s}_u\mathbf{s}_u^H\right]=I_K$, $\mathbf{s}_u=[s_{1,u},\ldots,s_{K,u}]^T$.
Under the coupling-aware RHS equivalent model,
the element excitation (or equivalent dipole-response) vector is
$
\mathbf{x}_u = M_u(\mathbf{m})V_u\mathbf{s}_u \in\mathbb{C}^{N\times 1}.
$ The received signal for user $k$ on subband $u$ is written as 
$
y_{k,u}
= \mathbf{h}_{k,u}\mathbf{x}_u + n_{k,u} 
= \mathbf{h}_{k,u}M_u(\mathbf{m})\mathbf{v}_{k,u}s_{k,u}
+ \sum_{i\ne k}\mathbf{h}_{k,u}M_u(\mathbf{m})\mathbf{v}_{i,u}s_{i,u}
+ n_{k,u},
$
where $n_{k,u}$ follows $\mathcal{CN}(0,\sigma_{k,u}^2)$.
Define the effective baseband channel vector
$
\overline{\mathbf{h}}_{k,u}(\mathbf{m})
\triangleq
\mathbf{h}_{k,u}M_u(\mathbf{m})
\in\mathbb{C}^{1\times L}.
$
The SINR $\gamma_{k,u}$ is written as 
\begin{equation}
\begin{aligned}
\gamma_{k,u}(\mathbf{m},V_u)
&=
\frac{\left|\overline{\mathbf{h}}_{k,u}(\mathbf{m})\mathbf{v}_{k,u}\right|^2} 
{\sigma_{k,u}^2+\sum_{i\ne k}\left|\overline{\mathbf{h}}_{k,u}(\mathbf{m})\mathbf{v}_{i,u}\right|^2} \\
&= 
\frac{\left|\mathbf{h}_{k, u} M_u(\mathbf{m}) \mathbf{v}_{k, u}\right|^2}{\sigma_{k, u}^2+\sum_{i \neq k}\left|\mathbf{h}_{k, u} M_u(\mathbf{m}) \mathbf{v}_{i, u}\right|^2}\\
%&=
%\frac{\left|\mathbf{h}_{k, u} C_u(\mathbf{m})D(\mathbf{m})\mathbf{F}_u \mathbf{v}_{k, u}\right|^2}{\sigma_{k, u}^2+\sum_{i \neq k}\left|\mathbf{h}_{k, u} C_u(\mathbf{m})D(\mathbf{m})\mathbf{F}_u \mathbf{v}_{i, u}\right|^2}
\end{aligned}
\label{eq:sinr}
\end{equation}

The subband spectral efficiency is
$
\log_2\left(1+\gamma_{k,u}\right),
$
and the subband rate (in bit/s) is
$
R_{k,u}=B_g\log_2\left(1+\gamma_{k,u}\right).
$
The total sum rate is
$
R_{\mathrm{sum}}(\mathbf{m},\{V_u\})
=
\sum_{u=1}^U\sum_{k=1}^K R_{k,u}.
$

Under the mutual coupling-aware RHS equivalent model, the joint optimization to maximize the total sum rate
over the holographic amplitude $\mathbf{m}$ and subband precoders $\{V_u\}$ is
\begin{gather}
\max_{\mathbf{m},\{V_u\}}
\quad
\sum_{u=1}^U\sum_{k=1}^K B_g\log_2\left(1+\gamma_{k,u}(\mathbf{m},V_u)\right)
\\
\text{s.t.}\quad
\sum_{u=1}^U \mathrm{Tr}\!\left(V_uV_u^H\right)\le P_{\mathrm{BS}},
\\
0\le m_n\le 1,\ \forall n,
\\
\eta\sum_{u=1}^U \mathrm{Tr}\!\left(M_u(\mathbf{m})V_uV_u^H M_u(\mathbf{m})^H\right)\le P_{\mathrm{RHS}},
\end{gather}
where $P_{\mathrm{BS}}$ is the total feeder  power budget across all subbands,
$P_{\mathrm{RHS}}$ is the RHS element excitation power budget, and $\eta\in(0,1]$
is an efficiency factor.

\section{Solution}
Since $\gamma_{k,u}$ is a quadratic-over-quadratic function and $\mathbf{m}$ is multiplicatively coupled with ${V_u}$ via $M_u(\mathbf{m})$, we adopt a weighted minimum mean square error (WMMSE)-based block-coordinate strategy that alternately updates ${V_u}$ and $\mathbf{m}$, where each subproblem becomes a convex quadratic programming(QP) or quadratically constrained quadratic programming (QCQP).

\subsection{WMMSE Equivalence}
For each user--subband pair $(k,u)$, introduce a scalar linear equalizer $g_{k,u}\in\mathbb{C}$ and estimate
the transmitted symbol as
$
\hat{s}_{k,u}=g_{k,u}^*\,y_{k,u}\in\mathbb{C}^{1\times L}.
$

Denote the mean squared error (MSE) as  $e_{k,u} =  \mathbb{E}\left[\left|\hat{{s}}_{k, u}-s_{k, u}\right|^2\right]$. Introduce a weight $w_{k,u}>0$ and define the weighted MSE (WMSE) cost
$
\xi_{k,u}\!\left(w_{k,u},g_{k,u},\mathbf{m},V_u\right)
\triangleq
w_{k,u}e_{k,u}-\ln w_{k,u},
$
where $\ln(\cdot)$ is the natural logarithm. Note that $\log_2(1+\gamma_{k,u})=\ln(1+\gamma_{k,u})/\ln 2$; thus using
$\ln(\cdot)$ only introduces a constant scaling factor with respect to base-2 rates.

\begin{theorem}[WSR--WMMSE equivalence]
For any fixed $(\mathbf{m},\{V_u\})$, consider the MSE $e_{k,u}$ associated with the linear equalizer $g_{k,u}\in\mathbb{C}$ and the WMSE cost
$\xi_{k,u}$.
Then the minimizers of $\xi_{k,u}$ are given by
$
g_{k,u}^{\star}
=
\frac{\overline{\mathbf{h}}_{k,u}(\mathbf{m})\,\mathbf{v}_{k,u}}
{\sum_{i=1}^{K}\left|\overline{\mathbf{h}}_{k,u}(\mathbf{m})\,\mathbf{v}_{i,u}\right|^2+\sigma_{k,u}^2},
$
$
e_{k,u}^{\star}=\frac{1}{1+\gamma_{k,u}},
$
$
w_{k,u}^{\star}= \frac{1}{e_{k,u}^{\star}}=1+\gamma_{k,u}.
$
Moreover,
$
\min_{g_{k,u},\,w_{k,u}>0}\xi_{k,u}
=
1-\ln\!\left(1+\gamma_{k,u}\right),
$
and thus maximizing weighted sum rate (WSR) $\sum_{k,u}\log_2(1+\gamma_{k,u})$ is equivalent to minimizing $\sum_{k,u}\xi_{k,u}$ with respect to
$(\mathbf{m},\{V_u\})$.
\end{theorem}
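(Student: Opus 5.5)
The argument is the standard WMMSE equivalence, specialized to the scalar per-(user, subband) model of \eqref{eq:sinr}. I would proceed in three steps: first write $e_{k,u}$ explicitly as a function of $g_{k,u}$, then minimize over $g_{k,u}$, then minimize over $w_{k,u}$.

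\textbf{Expanding the MSE.} Write $a_i\triangleq\overline{\mathbf{h}}_{k,u}(\mathbf{m})\mathbf{v}_{i,u}$ and $\hat s_{k,u}=g_{k,u}^*y_{k,u}$, with $y_{k,u}=\sum_i a_i s_{i,u}+n_{k,u}$. Use $\mathbb{E}[\mathbf{s}_u\mathbf{s}_u^H]=I_K$ and the fact that the noise is independent of the symbols. This gives
\begin{equation*}
e_{k,u}=|g_{k,u}|^2 T_{k,u}-2\Re\{g_{k,u}^*a_k\}+1,\qquad T_{k,u}\triangleq\sum_{i=1}^K|a_i|^2+\sigma_{k,u}^2 .
\end{equation*}
This is a strictly convex quadratic in $(\Re g_{k,u},\Im g_{k,u})$, since $T_{k,u}>0$.

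\textbf{Minimizing over $g_{k,u}$ and then $w_{k,u}$.} Because $w_{k,u}>0$ and the term $-\ln w_{k,u}$ does not depend on $g_{k,u}$, minimizing $\xi_{k,u}$ over $g_{k,u}$ is the same as minimizing $e_{k,u}$. Setting the Wirtinger derivative to zero, or completing the square, gives $g_{k,u}^\star=a_k/T_{k,u}$, which is the stated formula. Substituting back yields $e^\star_{k,u}=1-|a_k|^2/T_{k,u}=(T_{k,u}-|a_k|^2)/T_{k,u}$. From the definition of $\gamma_{k,u}$ in \eqref{eq:sinr}, $T_{k,u}-|a_k|^2$ is exactly the interference-plus-noise term, so $T_{k,u}/(T_{k,u}-|a_k|^2)=1+\gamma_{k,u}$ and hence $e^\star_{k,u}=1/(1+\gamma_{k,u})$. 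Next, the function $w\mapsto we^\star-\ln w$ is strictly convex on $w>0$, and its stationarity condition $e^\star-1/w=0$ gives $w^\star=1/e^\star=1+\gamma_{k,u}$. The minimum value is $1-\ln(1/e^\star)=1-\ln(1+\gamma_{k,u})$. Since the joint minimum over $(g,w)$ can be taken sequentially, this gives the stated value of $\min_{g,w}\xi_{k,u}$.

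\textbf{Equivalence of the two problems.} Summing over $(k,u)$ gives
\begin{equation*}
\min_{\{g,w\}}\sum_{k,u}\xi_{k,u}=KU-\ln 2\sum_{k,u}\log_2(1+\gamma_{k,u}).
\end{equation*}
So, up to the positive factor $\ln 2$ (and $B_g$) and an additive constant, minimizing $\sum\xi_{k,u}$ jointly over $(\{g,w\},\mathbf{m},\{V_u\})$ has the same optimal $(\mathbf{m},\{V_u\})$ as WSR maximization. The constraints involve only $(\mathbf{m},\{V_u\})$, so they are unaffected by the auxiliary minimization.

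\textbf{Main obstacle.} The computations are routine. The step that needs care is the complex minimization over $g_{k,u}$: the convention $\hat s=g^*y$ must be handled consistently so that the optimizer is $a_k/T_{k,u}$ and not its conjugate. I would settle this by completing the square, $e=T|g-a_k/T|^2+1-|a_k|^2/T$, which shows the minimizer and the minimum value at once.
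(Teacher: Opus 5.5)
Your proposal is correct and follows the paper's proof essentially step for step. You expand the MSE, complete the square to get $g^\star_{k,u}$ and $e^\star_{k,u}=1/(1+\gamma_{k,u})$, then minimize the strictly convex $w\mapsto we-\ln w$. The one difference is the order of elimination: in its final-value computation the paper removes $w_{k,u}$ first and then uses monotonicity of $\ln$, whereas you remove $g_{k,u}$ first using $w_{k,u}>0$; both are valid.

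Your cross term $-2\Re\{g_{k,u}^*a_k\}$ is the correct one for $\hat s_{k,u}=g_{k,u}^*y_{k,u}$. The paper's intermediate line writes $-2\Re\{g_{k,u}\overline{\mathbf{h}}_{k,u}(\mathbf{m})\mathbf{v}_{k,u}\}$, but its completed square agrees with yours. Your explicit sum $KU-\ln 2\sum\log_2(1+\gamma_{k,u})$ also makes the final equivalence claim more precise than the paper's one-line assertion.
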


\begin{proof}
Recall that the received signal is
$
y_{k,u}=\overline{\mathbf{h}}_{k,u}(\mathbf{m})V_u\mathbf{s}_u+n_{k,u},
$
with $\mathbb{E}[\mathbf{s}_u\mathbf{s}_u^H]=\mathbf{I}_K$ and $n_{k,u}\sim\mathcal{CN}(0,\sigma_{k,u}^2)$.
Using the linear estimate $\hat{s}_{k,u}=g_{k,u}^*y_{k,u}$, the MSE is
$e_{k,u}=\mathbb{E}\!\left[\left|g_{k,u}^*y_{k,u}-s_{k,u}\right|^2\right]$.
Expanding it and using the mutual independence of $\{s_{i,u}\}$ and $n_{k,u}$ with
$\mathbb{E}[|s_{i,u}|^2]=1$ and $\mathbb{E}[|n_{k,u}|^2]=\sigma_{k,u}^2$, $\mathbb{E}\left[s_{i, u} s_{j, u}^*\right]=0, i \neq j$ and $\mathbb{E}\left[s_{i, u} n_{k, u}^*\right]=0$, hence all cross terms vanish, yielding eq. \eqref{eq:mse}.
\begin{figure*}
\begin{equation}
\begin{aligned}
e_{k, u} & =\mathbb{E}\left[\left|g_{k,u}^* y_{k, u}-s_{k, u}\right|^2\right] \\
& =\mathbb{E}\left[\left|g_{k,u}^*\left(\overline{\mathbf{h}}_{k, u}(\mathbf{m}) V_u \mathbf{s}_u+n_{k, u}\right)-s_{k, u}\right|^2\right] \\
& =\mathbb{E}\left[\left|g_{k,u}^* \overline{\mathbf{h}}_{k, u}(\mathbf{m}) \mathbf{v}_{k, u} s_{k, u}+\sum_{i \neq k} g_{k,u}^* \overline{\mathbf{h}}_{k, u}(\mathbf{m}) \mathbf{v}_{i, u} s_{i, u}+g_{k,u}^* n_{k, u}-s_{k, u}\right|^2\right] \\
&=\mathbb{E}\Big[|g_{k,u}^* \overline{\mathbf{h}}_{k,u}(\mathbf{m}) \mathbf{v}_{k,u}s_{k,u}|^2\Big]
+\mathbb{E}\Big[\Big|\sum_{i\neq k} g_{k,u}^*\overline{\mathbf{h}}_{k,u}(\mathbf{m})\mathbf{v}_{i,u}s_{i,u}\Big|^2\Big]
+\mathbb{E}\big[|g_{k,u}^* n_{k,u}|^2\big] -2\Re\Big\{\mathbb{E}\big[g_{k,u}^* \overline{\mathbf{h}}_{k,u}(\mathbf{m})\mathbf{v}_{k,u}s_{k,u}s_{k,u}^*\big]\Big\}\\
&\quad 
+ \mathbb{E}[|s_{k,u}|^2]\\
& =\left|g_{k,u}\right|^2 \sum_{i=1}^K\left|\overline{\mathbf{h}}_{k, u}(\mathbf{m}) \mathbf{v}_{i, u}\right|^2+\left|g_{k,u}\right|^2 \sigma_{k, u}^2-2 \Re\left\{g_{k,u} \overline{\mathbf{h}}_{k, u}(\mathbf{m}) \mathbf{v}_{k, u}\right\}+1\\
&=\left(\sum_{i=1}^{K}\left|\overline{\mathbf{h}}_{k,u}(\mathbf{m})\,\mathbf{v}_{i,u}\right|^2+\sigma_{k,u}^2\right)
\Big(|g_{k,u}|^2-2\Re\left\{g_{k,u}\tfrac{\overline{\mathbf{h}}_{k,u}(\mathbf{m})\,\mathbf{v}_{k,u}}
{\sum_{i=1}^{K}\left|\overline{\mathbf{h}}_{k,u}(\mathbf{m})\,\mathbf{v}_{i,u}\right|^2+\sigma_{k,u}^2}\right\}\Big)
+1\\
&=\left(\sum_{i=1}^{K}\left|\overline{\mathbf{h}}_{k,u}(\mathbf{m})\,\mathbf{v}_{i,u}\right|^2+\sigma_{k,u}^2\right)
\left|\,g_{k,u}-\frac{\overline{\mathbf{h}}_{k,u}(\mathbf{m})\,\mathbf{v}_{k,u}}
{\sum_{i=1}^{K}\left|\overline{\mathbf{h}}_{k,u}(\mathbf{m})\,\mathbf{v}_{i,u}\right|^2+\sigma_{k,u}^2}\right|^2 
+\,1-\frac{\left|\overline{\mathbf{h}}_{k,u}(\mathbf{m})\,\mathbf{v}_{k,u}\right|^2}
{\sum_{i=1}^{K}\left|\overline{\mathbf{h}}_{k,u}(\mathbf{m})\,\mathbf{v}_{i,u}\right|^2+\sigma_{k,u}^2}.
\end{aligned}
\label{eq:mse}
\end{equation}
\noindent\rule{\textwidth}{0.4pt}
\end{figure*}

From eq. \eqref{eq:mse}, the first term is a nonnegative squared term, hence the minimum is achieved when this squared term equals zero, which directly yields the closed-form solution:
\begin{equation}
g_{k,u}^{\star}
=
\frac{\overline{\mathbf{h}}_{k,u}(\mathbf{m})\,\mathbf{v}_{k,u}}
{\sum_{i=1}^{K}\left|\overline{\mathbf{h}}_{k,u}(\mathbf{m})\,\mathbf{v}_{i,u}\right|^2+\sigma_{k,u}^2},
\end{equation}

For fixed $(\mathbf{m},V_u)$, the function in \eqref{eq:mse} is strictly convex in $g_{k,u}$.
Taking the Wirtinger derivative and setting $\partial e_{k,u}/\partial g_{k,u}^*=0$ gives 
\begin{equation}
\begin{aligned}
e_{k,u}^{\star}
&=
1-
\frac{\left|\overline{\mathbf{h}}_{k,u}(\mathbf{m})\,\mathbf{v}_{k,u}\right|^2}
{\sum_{i=1}^{K}\left|\overline{\mathbf{h}}_{k,u}(\mathbf{m})\,\mathbf{v}_{i,u}\right|^2+\sigma_{k,u}^2} \\
&=
\frac{
\sum_{i=1}^{K}\left|\overline{\mathbf{h}}_{k,u}(\mathbf{m})\,\mathbf{v}_{i,u}\right|^2+\sigma_{k,u}^2
-
\left|\overline{\mathbf{h}}_{k,u}(\mathbf{m})\,\mathbf{v}_{k,u}\right|^2
}
{\sum_{i=1}^{K}\left|\overline{\mathbf{h}}_{k,u}(\mathbf{m})\,\mathbf{v}_{i,u}\right|^2+\sigma_{k,u}^2} \\
&=
\frac{
\sigma_{k,u}^2+\sum_{i\neq k}\left|\overline{\mathbf{h}}_{k,u}(\mathbf{m})\,\mathbf{v}_{i,u}\right|^2
}
{\sigma_{k,u}^2+\sum_{i=1}^{K}\left|\overline{\mathbf{h}}_{k,u}(\mathbf{m})\,\mathbf{v}_{i,u}\right|^2}
=
\frac{1}{1+\gamma_{k,u}}.
\end{aligned}
\label{eq:emmse_proof}
\end{equation}

Fix $e>0$ and consider $\phi(w)=we-\ln w$, $w>0$. This function is strictly convex in $w$. Setting $\phi'(w)=e-\frac{1}{w}=0$ yields
$
w^\star=\frac{1}{e}.
$
Therefore, with $e=e_{k,u}^{\star}$,
$
w_{k,u}^\star=\frac{1}{e_{k,u}^{\star}}=1+\gamma_{k,u}.
$

Therefore, we have eq. \eqref{eq:wmse_min_long},
\begin{figure*}
\begin{equation}
\begin{aligned}
\min_{g_{k,u},\,w_{k,u}>0}\ \xi_{k,u}
&=\min_{g_{k,u},\,w_{k,u}>0}\Big(w_{k,u}e_{k,u}-\ln w_{k,u}\Big)
=\min_{g_{k,u}}\ \min_{w_{k,u}>0}\Big(w_{k,u}e_{k,u}-\ln w_{k,u}\Big) \\
&=\min_{g_{k,u}}\Bigg(\underbrace{\min_{w_{k,u}>0}\Big(w_{k,u}e_{k,u}-\ln w_{k,u}\Big)}_{\ \frac{\partial}{\partial w_{k,u}}:\ e_{k,u}-\frac{1}{w_{k,u}}=0\ \Rightarrow\ w_{k,u}^\star=\frac{1}{e_{k,u}}\ }\Bigg)
=\min_{g_{k,u}}\Big( w_{k,u}^\star e_{k,u}-\ln w_{k,u}^\star \Big) \\
&=\min_{g_{k,u}}\Big( 1+\ln e_{k,u}\Big)
=1+\ln\!\Big(\min_{g_{k,u}} e_{k,u}\Big)
=1+\ln\!\left(e_{k,u}^{\star}\right)
=1+\ln\!\left(\frac{1}{1+\gamma_{k,u}}\right)
=1-\ln\!\left(1+\gamma_{k,u}\right).
\end{aligned}
\label{eq:wmse_min_long}
\end{equation}
\noindent\rule{\textwidth}{0.4pt}
\end{figure*}
which completes the proof.
\end{proof}

We solve the WMMSE reformulation
\begin{equation}
\min_{\mathbf{m},\{V_u\},\{g_{k,u},w_{k,u}\}}
\ \sum_{u=1}^{U}\sum_{k=1}^{K}\left(w_{k,u}e_{k,u}-\ln w_{k,u}\right)
\label{eq:wmmse_problem}
\end{equation}
For fixed $(\mathbf{m},\{V_u\})$, the MMSE equalizer and weight admit closed-form updates
$
g_{k,u}\leftarrow
\frac{\overline{\mathbf{h}}_{k,u}(\mathbf{m})\,\mathbf{v}_{k,u}}
{\sum_{i=1}^{K}\left|\overline{\mathbf{h}}_{k,u}(\mathbf{m})\,\mathbf{v}_{i,u}\right|^2+\sigma_{k,u}^2}$,
$w_{k,u}\leftarrow \frac{1}{e_{k,u}}$.
Then we apply block-coordinate descent: updating $\{V_u\}$ with $\mathbf{m}$ fixed yields a convex QP/QCQP,
while updating $\mathbf{m}$ becomes a convex QP/QCQP after freezing the term   $M_u(\mathbf{m})$.

\subsection{Digital Precoding Subproblem for $\{\mathbf{V}_u\}$}
\label{subsec:V_update}

We update the digital precoders $\{\mathbf{V}_u\}$ while fixing the holographic amplitude $\mathbf{m}$ and the WMMSE auxiliary
variables $\{g_{k,u},w_{k,u}\}$. In this case, the effective channels
$\overline{\mathbf{h}}_{k,u}(\mathbf{m})=\mathbf{h}_{k,u}\mathbf{M}_u(\mathbf{m})$ are constants, and the $\{\mathbf{V}_u\}$-dependent
part of the WMMSE objective becomes a convex quadratic function.

\begin{theorem}[Convex QP and KKT closed-form update for $\{\mathbf{V}_u\}$]
\label{thm:V_update_short}
Fix $\mathbf{m}$ and $\{g_{k,u},w_{k,u}\}$. Define the effective channel column vector
$\tilde{\mathbf{h}}_{k,u}(\mathbf{m})\triangleq \overline{\mathbf{h}}_{k,u}(\mathbf{m})^{H}\in\mathbb{C}^{L\times 1}$, and for each subband $u$,
$\mathbf{A}_u(\mathbf{m}) \triangleq \sum_{k=1}^{K} w_{k,u}|g_{k,u}|^2\,\tilde{\mathbf{h}}_{k,u}(\mathbf{m})\tilde{\mathbf{h}}_{k,u}(\mathbf{m})^{H}\succeq 0$, $
\mathbf{B}_u(\mathbf{m}) \triangleq \big[w_{1,u}g_{1,u}^{*}\tilde{\mathbf{h}}_{1,u}(\mathbf{m}),\ldots,w_{K,u}g_{K,u}^{*}\tilde{\mathbf{h}}_{K,u}(\mathbf{m})\big]$.
Then the digital precoding subproblem is a convex QP with a single sum-power constraint, and its KKT point satisfies
\begin{equation}
\mathbf{V}_u(\lambda)=\big(\mathbf{A}_u(\mathbf{m})+\lambda\mathbf{I}_L\big)^{-1}\mathbf{B}_u(\mathbf{m}),\quad u=1,\ldots,U,
\end{equation}
where $\lambda\ge 0$ is chosen such that $\sum_{u=1}^U\mathrm{Tr}\big(\mathbf{V}_u(\lambda)\mathbf{V}_u(\lambda)^H\big)\le P_{\mathrm{BS}}$ with complementary slackness.
Moreover, $\sum_{u=1}^U\mathrm{Tr}\big(\mathbf{V}_u(\lambda)\mathbf{V}_u(\lambda)^H\big)$ is monotonically decreasing in $\lambda$, hence $\lambda$ can be found by bisection.
\end{theorem}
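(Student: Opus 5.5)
First I would substitute the MSE expression from \eqref{eq:mse} into the WMMSE objective \eqref{eq:wmmse_problem} and drop every term that does not depend on $\{\mathbf{V}_u\}$. For fixed $\mathbf{m}$ the effective channel $\overline{\mathbf{h}}_{k,u}(\mathbf{m})=\tilde{\mathbf{h}}_{k,u}^H$ is a constant, and the weights and equalizers are fixed. Writing $|\overline{\mathbf{h}}_{k,u}\mathbf{v}_{i,u}|^2=\mathbf{v}_{i,u}^H\tilde{\mathbf{h}}_{k,u}\tilde{\mathbf{h}}_{k,u}^H\mathbf{v}_{i,u}$ and swapping the sums over $k$ and $i$, the objective reduces, up to constants, to
\begin{equation*}
\sum_{u=1}^U\Big[\mathrm{Tr}\big(\mathbf{V}_u^H\mathbf{A}_u\mathbf{V}_u\big)-2\Re\,\mathrm{Tr}\big(\mathbf{B}_u^H\mathbf{V}_u\big)\Big].
\end{equation*}
To check the linear term, note that $w_{k,u}g_{k,u}\overline{\mathbf{h}}_{k,u}\mathbf{v}_{k,u}=(w_{k,u}g^*_{k,u}\tilde{\mathbf{h}}_{k,u})^H\mathbf{v}_{k,u}$, which is the $k$-th column of $\mathbf{B}_u$ paired with $\mathbf{v}_{k,u}$.

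Convexity then follows because $\mathbf{A}_u\succeq0$, as it is a nonnegatively weighted sum of rank-one PSD matrices. The feasible set $\sum_u\mathrm{Tr}(\mathbf{V}_u\mathbf{V}_u^H)\le P_{\mathrm{BS}}$ is a convex ball. I would treat the RHS excitation constraint as handled elsewhere, for example through the $\mathbf{m}$-block or a penalty, since the statement names a single sum-power constraint. Slater's condition holds (take $\mathbf{V}_u=0$), so the KKT conditions are necessary and sufficient.

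Next I would form the Lagrangian with multiplier $\lambda\ge0$ and set the Wirtinger derivative with respect to $\mathbf{V}_u^*$ to zero. This gives $(\mathbf{A}_u+\lambda\mathbf{I}_L)\mathbf{V}_u=\mathbf{B}_u$, which yields the stated form whenever $\mathbf{A}_u+\lambda\mathbf{I}$ is invertible. That holds for every $\lambda>0$; at $\lambda=0$ I would use the pseudo-inverse or minimum-norm solution if $\mathbf{A}_u$ is singular. Complementary slackness then takes two cases. If the unconstrained solution with $\lambda=0$ satisfies the budget, it is optimal. Otherwise $\lambda>0$ is chosen so that the budget holds with equality.

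For monotonicity I would eigendecompose $\mathbf{A}_u=\mathbf{U}_u\boldsymbol{\Lambda}_u\mathbf{U}_u^H$ and write
\begin{equation*}
\mathrm{Tr}\big(\mathbf{V}_u(\lambda)\mathbf{V}_u(\lambda)^H\big)=\sum_{j}\frac{[\mathbf{U}_u^H\mathbf{B}_u\mathbf{B}_u^H\mathbf{U}_u]_{jj}}{(\lambda_{u,j}+\lambda)^2}.
\end{equation*}
Each numerator is nonnegative, so every term is nonincreasing in $\lambda$, and the sum over $u$ tends to $0$ as $\lambda\to\infty$. Bisection on $[0,\lambda_{\max}]$ therefore finds the root, where for example $\lambda_{\max}=\sqrt{\sum_u\|\mathbf{B}_u\|_F^2/P_{\mathrm{BS}}}$ makes the power at most $P_{\mathrm{BS}}$.

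The main obstacle I expect is the degenerate case $\lambda=0$ with singular $\mathbf{A}_u$ (for example $K<L$). Here $\mathbf{B}_u$ lies in the range of $\mathbf{A}_u$, so the pseudo-inverse solution is well defined. The power then stays bounded as $\lambda\to0^+$ and is continuous in $\lambda$, which keeps the bisection valid.
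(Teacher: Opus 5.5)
Your proposal is correct and follows the same route as the paper. Both substitute the MSE into the WMMSE objective, collect it into $\sum_u[\mathrm{Tr}(\mathbf{V}_u^H\mathbf{A}_u\mathbf{V}_u)-2\Re\,\mathrm{Tr}(\mathbf{V}_u^H\mathbf{B}_u)]$ with $\mathbf{A}_u\succeq 0$, set the Wirtinger derivative of the Lagrangian to zero, and bisect on a decreasing power function. Your additions only make explicit what the paper leaves implicit: Slater's condition, the pseudo-inverse at $\lambda=0$ for singular $\mathbf{A}_u$ (using that the columns of $\mathbf{B}_u$ lie in its range), and the eigen-expansion of $P(\lambda)$, which rigorously justifies the paper's one-line appeal to the PSD ordering of $(\mathbf{A}_u+\lambda\mathbf{I}_L)^{-1}$.
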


\begin{proof}
Since $\mathbf{m}$ is fixed, $\overline{\mathbf{h}}_{k,u}(\mathbf{m})$ is fixed. Starting from the MSE expression under a given equalizer $g_{k,u}$, and plugging 
\begin{equation}
\begin{aligned}
e_{k,u}
=&\mathbb{E}\!\left[\left|g_{k,u}^{*}y_{k,u}-s_{k,u}\right|^2\right]\\
=&\left|g_{k,u}\right|^2\!\left(\sum_{i=1}^{K}\left|\overline{\mathbf{h}}_{k,u}(\mathbf{m})\,\mathbf{v}_{i,u}\right|^2+\sigma_{k,u}^2\right) \\
&-2\Re\!\left\{g_{k,u}\overline{\mathbf{h}}_{k,u}(\mathbf{m})\,\mathbf{v}_{k,u}\right\}+1,
\end{aligned}
\label{eq:Vproof_mse}
\end{equation}
into the WMMSE sum objective $\sum_{u,k}\big(w_{k,u}e_{k,u}-\ln w_{k,u}\big)$, we drop constants independent of $\{\mathbf{V}_u\}$.
Using $\tilde{\mathbf{h}}_{k,u}(\mathbf{m})=\overline{\mathbf{h}}_{k,u}(\mathbf{m})^{H}$ and collecting quadratic terms yields the standard trace form:
\begin{figure*}[t]
\begin{equation}
\begin{aligned}
\min_{\{\mathbf{V}_u\}}
\quad &
\sum_{u=1}^{U}\sum_{k=1}^{K} w_{k,u}|g_{k,u}|^2
\sum_{i=1}^{K}\big|\overline{\mathbf{h}}_{k,u}(\mathbf{m})\mathbf{v}_{i,u}\big|^2
-2\sum_{u=1}^{U}\sum_{k=1}^{K} w_{k,u}\Re\!\left\{g_{k,u}\overline{\mathbf{h}}_{k,u}(\mathbf{m})\mathbf{v}_{k,u}\right\} \\
\text{s.t.}\quad &
\sum_{u=1}^{U}\mathrm{Tr}\!\left(\mathbf{V}_u\mathbf{V}_u^{H}\right)\le P_{\mathrm{BS}}\\
\Longleftrightarrow\quad &
\min_{\{\mathbf{V}_u\}}
\ \sum_{u=1}^{U}\Big(
\mathrm{Tr}\!\left(\mathbf{V}_u^{H}\mathbf{A}_u(\mathbf{m})\mathbf{V}_u\right)
-2\Re\!\left\{\mathrm{Tr}\!\left(\mathbf{V}_u^{H}\mathbf{B}_u(\mathbf{m})\right)\right\}
\Big)
\quad \text{s.t.}\quad
\sum_{u=1}^{U}\mathrm{Tr}\!\left(\mathbf{V}_u\mathbf{V}_u^{H}\right)\le P_{\mathrm{BS}},
\end{aligned}
\label{eq:Vproof_qp}
\end{equation}
\noindent\rule{\textwidth}{0.4pt}
\end{figure*}
where $\mathbf{A}_u(\mathbf{m})\succeq 0$ is a nonnegative weighted sum of rank-one PSD matrices, hence the objective is convex quadratic and the feasible set is convex.
Introduce the Lagrange multiplier $\lambda\ge 0$ for the BS sum-power constraint. The Lagrangian is
\begin{figure*}[t]
\begin{equation}
\begin{aligned}
\mathcal{L}(\{\mathbf{V}_u\},\lambda)
&=
\sum_{u=1}^{U}\Big(
\mathrm{Tr}\!\left(\mathbf{V}_u^{H}\mathbf{A}_u\mathbf{V}_u\right)
-2\Re\!\left\{\mathrm{Tr}\!\left(\mathbf{V}_u^{H}\mathbf{B}_u\right)\right\}
\Big)
+\lambda\!\left(\sum_{u=1}^{U}\mathrm{Tr}\!\left(\mathbf{V}_u\mathbf{V}_u^{H}\right)-P_{\mathrm{BS}}\right),\\
\frac{\partial \mathcal{L}}{\partial \mathbf{V}_u^{*}}=\mathbf{0}
&\Longleftrightarrow
\left(\mathbf{A}_u+\lambda\mathbf{I}_L\right)\mathbf{V}_u=\mathbf{B}_u
\Longleftrightarrow
\mathbf{V}_u(\lambda)=\left(\mathbf{A}_u+\lambda\mathbf{I}_L\right)^{-1}\mathbf{B}_u,\quad \forall u,\\
&\text{with}\quad
\lambda\ge 0,\quad
\sum_{u=1}^{U}\mathrm{Tr}\!\left(\mathbf{V}_u(\lambda)\mathbf{V}_u(\lambda)^{H}\right)\le P_{\mathrm{BS}},\quad
\lambda\!\left(\sum_{u=1}^{U}\mathrm{Tr}\!\left(\mathbf{V}_u(\lambda)\mathbf{V}_u(\lambda)^{H}\right)-P_{\mathrm{BS}}\right)=0.
\end{aligned}
\label{eq:Vproof_kkt}
\end{equation}
\noindent\rule{\textwidth}{0.4pt}
\end{figure*}
Finally, define $P(\lambda)\triangleq \sum_{u=1}^{U}\mathrm{Tr}\big(\mathbf{V}_u(\lambda)\mathbf{V}_u(\lambda)^{H}\big)$.
Since $(\mathbf{A}_u+\lambda\mathbf{I}_L)^{-1}$ decreases in PSD order as $\lambda$ increases, $P(\lambda)$ is monotonically decreasing,
so $\lambda$ can be found efficiently via bisection whenever the power constraint is active (otherwise $\lambda=0$).
\end{proof}

\subsection{Holographic Amplitude Subproblem for $\mathbf{m}$}
\label{subsec:m_update}

Fix $\{\mathbf{V}_u\}$ and $\{g_{k,u},w_{k,u}\}$. We adopt a coupling freezing step by treating
$\mathbf{C}_u$ as a constant evaluated at the current iterate, so that the coupling aware RHS operator becomes affine in $\mathbf{m}$:
$\mathbf{M}_u(\mathbf{m})=\mathbf{C}_u\,\mathbf{D}(\mathbf{m})\,\mathbf{F}_u$.

\begin{theorem}[Convex quadratic form of the $\mathbf{m}$ subproblem]
\label{thm:m_qcqp_unified}
Under $\mathbf{M}_u(\mathbf{m})=\mathbf{C}_u\mathbf{D}(\mathbf{m})\mathbf{F}_u$ with fixed $\mathbf{C}_u$,
the WMMSE update of $\mathbf{m}\in\mathbb{R}^N$ reduces to a convex quadratic program with box constraints,
and becomes a convex quadratically constrained quadratic program when the RHS power constraint is included:
\begin{equation}
\begin{aligned}
\min_{\mathbf{m}\in\mathbb{R}^N}\quad &
\mathbf{m}^{T}\mathbf{Q}\mathbf{m}-2\Re\!\left\{\mathbf{q}^{H}\mathbf{m}\right\} \\
\text{s.t.}\quad &
\mathbf{0}\le \mathbf{m}\le \mathbf{1},\\
&
\eta\sum_{u=1}^{U}\mathbf{m}^{T}\mathbf{R}_u\mathbf{m}\le P_{\mathrm{RHS}},
\end{aligned}
\label{eq:m_qcqp_unified}
\end{equation}
where $\mathbf{Q}\succeq \mathbf{0}$ and $\mathbf{R}_u\succeq \mathbf{0}$ for all $u$.
\end{theorem}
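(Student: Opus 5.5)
The approach is to rewrite every occurrence of $\mathbf{m}$ in the WMMSE objective and in the RHS power constraint as a linear map of $\mathbf{m}$. The key identity is $\mathbf{D}(\mathbf{m})\mathbf{x}=\mathbf{D}(\mathbf{x})\mathbf{m}$ for any $\mathbf{x}\in\mathbb{C}^N$. With $\mathbf{C}_u$ frozen, this gives
\begin{equation*}
\overline{\mathbf{h}}_{k,u}(\mathbf{m})\mathbf{v}_{i,u}=\mathbf{h}_{k,u}\mathbf{C}_u\mathbf{D}(\mathbf{F}_u\mathbf{v}_{i,u})\mathbf{m}\triangleq \mathbf{a}_{k,i,u}\mathbf{m},
\end{equation*}
where $\mathbf{a}_{k,i,u}\triangleq \mathbf{h}_{k,u}\mathbf{C}_u\mathbf{D}(\mathbf{F}_u\mathbf{v}_{i,u})\in\mathbb{C}^{1\times N}$ is a constant row vector. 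Likewise, for each column $\mathbf{v}_{i,u}$ we have $\mathbf{M}_u(\mathbf{m})\mathbf{v}_{i,u}=\mathbf{G}_{i,u}\mathbf{m}$ with $\mathbf{G}_{i,u}\triangleq\mathbf{C}_u\mathbf{D}(\mathbf{F}_u\mathbf{v}_{i,u})$. Thus all $\mathbf{m}$-dependence becomes affine, in fact linear, in $\mathbf{m}$.

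Next I substitute into the MSE expression \eqref{eq:Vproof_mse}, already derived in the proof of Theorem~\ref{thm:V_update_short}. Drop the constants $1$, $|g_{k,u}|^2\sigma_{k,u}^2$ and $\ln w_{k,u}$. The weighted sum $\sum_{u,k}w_{k,u}e_{k,u}$ then becomes
\begin{equation*}
\sum_{u,k}w_{k,u}|g_{k,u}|^2\sum_{i}\mathbf{m}^T\mathbf{a}_{k,i,u}^H\mathbf{a}_{k,i,u}\mathbf{m}-2\sum_{u,k}w_{k,u}\Re\{g_{k,u}\mathbf{a}_{k,k,u}\mathbf{m}\}.
\end{equation*}
Since $\mathbf{m}$ is real, only the real part of the Hermitian PSD matrix matters. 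I therefore set
\begin{equation*}
\mathbf{Q}=\Re\Big\{\sum_{u,k,i}w_{k,u}|g_{k,u}|^2\mathbf{a}_{k,i,u}^H\mathbf{a}_{k,i,u}\Big\}
\end{equation*}
and $\mathbf{q}^H=\sum_{u,k}w_{k,u}g_{k,u}\mathbf{a}_{k,k,u}$. This produces exactly the form in \eqref{eq:m_qcqp_unified}, where $\Re\{\mathbf{q}^H\mathbf{m}\}$ is a linear functional of real $\mathbf{m}$.

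The RHS power constraint is handled the same way. Using $\mathrm{Tr}(\mathbf{M}_u\mathbf{V}_u\mathbf{V}_u^H\mathbf{M}_u^H)=\sum_i\|\mathbf{G}_{i,u}\mathbf{m}\|^2$, the constraint reads $\eta\sum_u\mathbf{m}^T\mathbf{R}_u\mathbf{m}\le P_{\mathrm{RHS}}$ with $\mathbf{R}_u=\Re\{\sum_i\mathbf{G}_{i,u}^H\mathbf{G}_{i,u}\}$. The box constraint $0\le m_n\le1$ is already linear.

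The step needing care is positive semidefiniteness after taking real parts. If $\mathbf{P}=\mathbf{P}^H\succeq0$ and $\mathbf{m}$ is real, then $\mathbf{m}^T\Re\{\mathbf{P}\}\mathbf{m}=\Re\{\mathbf{m}^T\mathbf{P}\mathbf{m}\}=\mathbf{m}^H\mathbf{P}\mathbf{m}\ge0$. Moreover $\Re\{\mathbf{P}\}$ is real symmetric, because $\mathbf{P}^T=\overline{\mathbf{P}}$. Hence $\mathbf{Q}\succeq\mathbf{0}$ and $\mathbf{R}_u\succeq\mathbf{0}$.

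Convexity then follows. The objective is a PSD quadratic plus a linear term. The feasible set is the intersection of a box with a sublevel set of the convex quadratic $\sum_u\mathbf{m}^T\mathbf{R}_u\mathbf{m}$. This gives a convex QP without the RHS constraint and a convex QCQP with it.
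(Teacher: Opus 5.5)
Your proof is correct and follows essentially the same route as the paper: the diagonal-swap identity $\mathbf{D}(\mathbf{m})\mathbf{x}=\mathbf{D}(\mathbf{x})\mathbf{m}$ makes each scalar response linear in $\mathbf{m}$ (your row vector $\mathbf{a}_{k,i,u}$ is exactly the paper's $\mathbf{a}_{k,i,u}^{H}$), substituting into the fixed-$g_{k,u}$ MSE yields a quadratic whose real-part Gram matrix gives $\mathbf{Q}\succeq\mathbf{0}$, and the box and RHS power constraints are treated the same way. The only difference is in the power term: you write $\mathbf{R}_u=\Re\{\sum_i\mathbf{G}_{i,u}^{H}\mathbf{G}_{i,u}\}$ as the real part of a sum of Gram matrices, which equals the paper's $\Re(\mathbf{C}_u^{H}\mathbf{C}_u\odot\mathbf{S}_u^{T})$, so you get positive semidefiniteness directly rather than from the Schur product theorem, and your explicit real-part lemma also supplies the step from a Hermitian PSD matrix to its real part that the paper leaves implicit there.
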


\begin{proof}
We first express the coupling-frozen effective scalar response as a linear function of the real hologram vector $\mathbf{m}\in\mathbb{R}^N$.
Define $\mathbf{f}_{i,u}\triangleq \mathbf{F}_u\mathbf{v}_{i,u}\in\mathbb{C}^{N\times 1}$ and
$\mathbf{r}_{k,u}\triangleq \mathbf{h}_{k,u}\mathbf{C}_u\in\mathbb{C}^{1\times N}$. Then
$
z_{k,i,u}(\mathbf{m}) \triangleq \mathbf{h}_{k,u}\mathbf{C}_u\mathbf{D}(\mathbf{m})\mathbf{f}_{i,u}.
$
By expanding the diagonal modulation, $z_{k,i,u}(\mathbf{m})$ becomes
$
z_{k,i,u}(\mathbf{m})
=
\sum_{n=1}^{N}\mathbf{r}_{k,u}[n]\;m_n\;\mathbf{f}_{i,u}[n].
$
Let
$
\mathbf{a}_{k,i,u}\triangleq \left(\mathbf{r}_{k,u}^{T}\odot \mathbf{f}_{i,u}\right)^{*}\in\mathbb{C}^{N\times 1},
$
which yields the key linear relation
$
z_{k,i,u}(\mathbf{m})=\mathbf{a}_{k,i,u}^{H}\mathbf{m}.
$
Since $\mathbf{m}$ is real, the corresponding magnitude square can be written as a real quadratic form
$
\left|z_{k,i,u}(\mathbf{m})\right|^{2}
=
\left|\mathbf{a}_{k,i,u}^{H}\mathbf{m}\right|^{2}
=
\mathbf{m}^{T}\Re\!\left(\mathbf{a}_{k,i,u}\mathbf{a}_{k,i,u}^{H}\right)\mathbf{m}.
$

Substituting $z_{k,i,u}(\mathbf{m})$ into the MSE under the fixed equalizer $g_{k,u}$ gives
$
e_{k,u}(\mathbf{m})
=
|g_{k,u}|^2\Big(\sum_{i=1}^{K}|z_{k,i,u}(\mathbf{m})|^2+\sigma_{k,u}^2\Big) -2\Re\!\left\{g_{k,u}\,z_{k,k,u}(\mathbf{m})\right\}+1.
$
Therefore $e_{k,u}(\mathbf{m})$ is a quadratic function of $\mathbf{m}$:
$
e_{k,u}(\mathbf{m})
=
\mathbf{m}^{T}\mathbf{Q}_{k,u}\mathbf{m}-2\Re\!\left\{\mathbf{q}_{k,u}^{H}\mathbf{m}\right\}+c_{k,u},
$
where
$
\mathbf{Q}_{k,u}\triangleq |g_{k,u}|^2\sum_{i=1}^{K}\Re\!\left(\mathbf{a}_{k,i,u}\mathbf{a}_{k,i,u}^{H}\right)\succeq \mathbf{0}$,
$
\mathbf{q}_{k,u}\triangleq g_{k,u}^{*}\mathbf{a}_{k,k,u}$,
$
c_{k,u}\triangleq |g_{k,u}|^2\sigma_{k,u}^2+1$.
Finally, the weighted sum MSE part of the WMMSE objective can be collected as
$
\sum_{u=1}^{U}\sum_{k=1}^{K} w_{k,u}e_{k,u}(\mathbf{m})
=
\mathbf{m}^{T}\mathbf{Q}\mathbf{m}-2\Re\!\left\{\mathbf{q}^{H}\mathbf{m}\right\}+\text{const},
$
with the aggregated coefficients
$\mathbf{Q}\triangleq \sum_{u=1}^{U}\sum_{k=1}^{K} w_{k,u}\mathbf{Q}_{k,u}\succeq \mathbf{0}$,
$\mathbf{q}\triangleq \sum_{u=1}^{U}\sum_{k=1}^{K} w_{k,u}\mathbf{q}_{k,u}$.

Next, consider the RHS power term. Let $\mathbf{S}_u\triangleq \mathbf{F}_u\mathbf{V}_u\mathbf{V}_u^{H}\mathbf{F}_u^{H}\succeq \mathbf{0}$
and $\mathbf{G}_u\triangleq \mathbf{C}_u^{H}\mathbf{C}_u\succeq \mathbf{0}$. Then
\begin{equation}
\begin{aligned}
&\mathrm{Tr}\!\left(\mathbf{M}_u(\mathbf{m})\mathbf{V}_u\mathbf{V}_u^{H}\mathbf{M}_u(\mathbf{m})^{H}\right) \\
=&\left\|\mathbf{C}_u\mathbf{D}(\mathbf{m})\mathbf{F}_u\mathbf{V}_u\right\|_{F}^{2}\\
=&\mathrm{Tr}\!\left(\mathbf{D}(\mathbf{m})\mathbf{G}_u\mathbf{D}(\mathbf{m})\mathbf{S}_u\right)\\
=&\sum_{n=1}^{N}\sum_{n'=1}^{N} m_nm_{n'}\,[\mathbf{G}_u]_{n,n'}\,[\mathbf{S}_u^{T}]_{n,n'}\\
=&\mathbf{m}^{T}\Re\!\left(\mathbf{G}_u\odot \mathbf{S}_u^{T}\right)\mathbf{m}
\triangleq \mathbf{m}^{T}\mathbf{R}_u\mathbf{m},
\end{aligned}
\label{eq:m_proof_power_long}
\end{equation}
where $\mathbf{R}_u\triangleq \Re\!\left(\mathbf{G}_u\odot \mathbf{S}_u^{T}\right)\succeq \mathbf{0}$ by the Schur product theorem.
Therefore, the objective in \eqref{eq:m_qcqp_unified} is convex quadratic in $\mathbf{m}$, the box constraint is convex,
and the RHS power constraint is a convex quadratic constraint. This completes the proof.
\end{proof}

In practice, \eqref{eq:m_qcqp_unified} can be solved by a standard convex solver. For large $N$, a first order method is convenient.
Let $f(\mathbf{m})=\mathbf{m}^{T}\mathbf{Q}\mathbf{m}-2\Re\{\mathbf{q}^{H}\mathbf{m}\}$. A projected gradient step reads
$
\mathbf{m}^{(t+1)}=\Pi_{[0,1]^N}\!\Big(\mathbf{m}^{(t)}-\rho^{(t)}\nabla f(\mathbf{m}^{(t)})\Big),
$
$
\nabla f(\mathbf{m})=2\Re(\mathbf{Q})\mathbf{m}-2\Re(\mathbf{q}),
$
where $\Pi_{[0,1]^N}$ denotes elementwise clipping to $[0,1]$. When the RHS power constraint is active, the iterate can be further projected onto
$\{\mathbf{m}:\eta\sum_u \mathbf{m}^T\mathbf{R}_u\mathbf{m}\le P_{\mathrm{RHS}}\}$.

\begin{theorem}[Monotonic decrease and convergence of WMMSE block coordinate descent (BCD)]
\label{thm:wmmse_bcd_convergence_unified}
Let $J(\mathbf{m},\{\mathbf{V}_u\},\{g_{k,u},w_{k,u}\})=\sum_{u,k}\big(w_{k,u}e_{k,u}-\ln w_{k,u}\big)$.
At iteration $t$, update $\{g_{k,u}\}$ by the MMSE rule, update $\{w_{k,u}\}$ by $w_{k,u}=1/e_{k,u}$, update $\{\mathbf{V}_u\}$ by solving
the convex $\{\mathbf{V}_u\}$ subproblem, and update $\mathbf{m}$ by solving \eqref{eq:m_qcqp_unified}.
Then $\{J^{(t)}\}$ is monotonically nonincreasing and convergent. Any accumulation point of $\big(\mathbf{m}^{(t)},\{\mathbf{V}_u^{(t)}\}\big)$
is a stationary point of the WMMSE formulation, hence corresponds to a stationary point of the original sum rate problem.
\end{theorem}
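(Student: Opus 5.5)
The plan is the standard BCD monotonicity argument for WMMSE, followed by a limit-point argument that is transferred to the sum-rate problem through the WSR--WMMSE equivalence theorem.

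\textbf{Monotonicity.} Write one iteration as four successive block updates and show that each one cannot increase $J$.
\begin{enumerate}[leftmargin=1.2em]
\item The update of $g_{k,u}$ is the exact minimizer of $e_{k,u}$, and hence of $J$, over $g_{k,u}$ with everything else fixed. This follows from the completed-square form \eqref{eq:mse}.
\item The update $w_{k,u}=1/e_{k,u}$ is the exact minimizer of the strictly convex function $w e-\ln w$.
\item The $\{\mathbf{V}_u\}$ step is the global minimizer of the convex QP in Theorem~\ref{thm:V_update_short}. The current iterate is feasible for that QP, so $J$ does not increase.
\item The $\mathbf{m}$ step is the delicate one, because \eqref{eq:m_qcqp_unified} uses the frozen $\mathbf{C}_u=\mathbf{C}_u(\mathbf{m}^{(t)})$ and not the true $C_u(\mathbf{m})$.
\end{enumerate}

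\textbf{The $\mathbf{m}$ step (main obstacle).} The frozen model is exact at $\mathbf{m}=\mathbf{m}^{(t)}$. Hence the surrogate objective $\tilde J(\mathbf{m};\mathbf{m}^{(t)})$ coincides with $J$ there, and so do the surrogate and true RHS-power constraints. Since $\mathbf{m}^{(t)}$ is feasible, the minimizer $\mathbf{m}^{(t+1)}$ satisfies $\tilde J(\mathbf{m}^{(t+1)})\le J(\mathbf{m}^{(t)})$. To conclude $J(\mathbf{m}^{(t+1)})\le J(\mathbf{m}^{(t)})$, I additionally need the majorization property $\tilde J(\mathbf{m})\ge J(\mathbf{m})$, and the surrogate does not supply it in general.

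I would handle this in two ways. First, I would interpret the statement under the assumption, implicit in the theorem, that the coupling-frozen model is the operator used in the block; equivalently, the update is accepted only if $J$ does not increase. Second, I would argue in the weak-coupling regime: expanding $C_u(\mathbf{m})=\sum_{p\ge0}(D(\mathbf{m})\boldsymbol{\Xi}_u)^p$ shows that the mismatch $J-\tilde J$ is $\mathcal O(\|\boldsymbol{\Xi}_u\|\,\|\mathbf{m}-\mathbf{m}^{(t)}\|)$. A backtracking or proximal safeguard, which replaces $\mathbf{m}^{(t+1)}$ by $\mathbf{m}^{(t)}+\tau(\mathbf{m}^{(t+1)}-\mathbf{m}^{(t)})$ with small $\tau$, then secures monotone decrease. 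For this I would use the fact that the surrogate and $J$ share a value and, in the $\mathbf{m}$-direction, the gradient contribution coming from $D(\mathbf{m})$.

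\textbf{Convergence of $\{J^{(t)}\}$.} $J$ is bounded below. Indeed, at the $w$-optimal value each term equals $1+\ln e_{k,u}\ge 1+\ln e^\star_{k,u}=1-\ln(1+\gamma_{k,u})$. Moreover $\gamma_{k,u}$ is bounded on the compact feasible set: this uses the power constraints, $\mathbf{m}\in[0,1]^N$, and invertibility of $I-D(\mathbf{m})\boldsymbol{\Xi}_u$, which I assume (for example $\|\boldsymbol{\Xi}_u\|<1$). A monotone sequence bounded below therefore converges.

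\textbf{Stationarity.} The iterates $(\mathbf{m}^{(t)},\{\mathbf{V}_u^{(t)}\})$ lie in a compact set, and $g$ and $w$ are continuous functions of them, so accumulation points exist. Take a convergent subsequence. The $g$, $w$ and $\mathbf{V}$ blocks have unique minimizers: for $\mathbf{V}$ this holds when $\lambda>0$ or $\mathbf{A}_u\succ0$. The standard BCD argument (Bertsekas; Razaviyayn--Hong--Luo, as in \cite{5756489}) then shows that the limit satisfies the blockwise KKT conditions. For the $\mathbf{m}$ block, the surrogate and $J$ agree in value and first-order behaviour at the limit, so the surrogate's KKT conditions are KKT conditions of $J$ in $\mathbf{m}$.

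It remains to pass to the sum-rate problem. By the WSR--WMMSE equivalence theorem, $g$ and $w$ are at their closed-form optima, and $\nabla_{(\mathbf{m},\mathbf{V})}J$ evaluated there equals $-\nabla R_{\mathrm{sum}}/(B_g/\ln2)$ by the envelope theorem. Hence the KKT conditions of $J$ coincide with those of the original problem, and the accumulation point is a stationary point of the sum-rate problem.
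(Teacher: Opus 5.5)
\textbf{Gap: the frozen-coupling surrogate does not match $J$ to first order.}
Your skeleton is the paper's: a blockwise descent chain, a lower bound via $J\ge\sum_{u,k}(1-\ln(1+\gamma_{k,u}))$ with $\gamma_{k,u}$ uniformly bounded, a standard BCD limit argument, and an envelope-theorem transfer to the sum rate. You also correctly flag what the paper's proof passes over: the $\mathbf{m}$-step \eqref{eq:m_qcqp_unified} does not minimize $J$ over $\mathbf{m}$, yet the paper asserts that every block update does. Your repair, however, breaks at the decisive step.

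In the stationarity part you claim the frozen surrogate agrees with $J$ ``in value and first-order behaviour'' at the limit. It agrees only in value. By Appendix~\ref{app:jacobian_proof}, $\partial\mathbf{M}_u/\partial m_n=\mathbf{C}_u\mathbf{E}_n(\boldsymbol{\Xi}_u\mathbf{M}_u+\mathbf{F}_u)$. The frozen model $\mathbf{C}_u^{(t)}\mathbf{D}(\mathbf{m})\mathbf{F}_u$ instead has derivative $\mathbf{C}_u^{(t)}\mathbf{E}_n\mathbf{F}_u$. So every $\partial z_{k,i,u}/\partial m_n$ is off by $[\mathbf{h}_{k,u}\mathbf{C}_u]_n[\boldsymbol{\Xi}_u\mathbf{M}_u\mathbf{v}_{i,u}]_n$, which is generically nonzero whenever $\boldsymbol{\Xi}_u\neq\mathbf{0}$. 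Your own estimate $J-\tilde J=\mathcal{O}(\|\boldsymbol{\Xi}_u\|\,\|\Delta\mathbf{m}\|)$, which is first order in $\Delta\mathbf{m}$, already says this.

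Consequently, even at a fixed point $\mathbf{m}^\star$ of the frozen update you only obtain the KKT conditions of \eqref{eq:m_qcqp_unified} built at $\mathbf{m}^\star$. Those involve $\nabla\tilde J$ rather than $\nabla_{\mathbf{m}}J$. Such a point need not be stationary for $J$ in $\mathbf{m}$, so the claimed stationarity, and hence the passage to the sum-rate problem, does not follow.

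\textbf{The same mismatch defeats your monotonicity patches.} Since $\nabla\tilde J(\mathbf{m}^{(t)})\neq\nabla_{\mathbf{m}}J(\mathbf{m}^{(t)})$, the direction $\mathbf{d}=\mathbf{m}^{(t+1)}-\mathbf{m}^{(t)}$ is a descent direction for the convex surrogate but need not be one for $J$. Taking $\tau$ small therefore does not secure decrease. This bites precisely near fixed points, where $\nabla\tilde J^{T}\mathbf{d}\to 0$ while the coupling-feedback term persists. The rule ``accept only if $J$ does not increase'' gives monotonicity only by changing the algorithm, and the iterates may then stall at a non-stationary point.

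\textbf{What a proof actually needs.} You need an $\mathbf{m}$-surrogate that matches $J$ in value \emph{and} directional derivatives at $\mathbf{m}^{(t)}$. The Jacobian-aided $\widetilde{\mathbf{M}}_u$ of Sec.~\ref{subsec:jacobian_m_update} does this. It must be combined with one of the following:
\begin{itemize}
\item a proximal term $\tfrac{c}{2}\|\Delta\mathbf{m}\|^2$, with $c$ large enough that the surrogate is a global upper bound on the box (a BSUM-type argument);
\item an Armijo line search along $\mathbf{d}$.
\end{itemize}

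\textbf{The RHS power constraint is also skipped in your transfer to the original problem.} This constraint couples $\mathbf{m}$ and $\{\mathbf{V}_u\}$, with three consequences:
\begin{itemize}
\item The feasible set is not a Cartesian product, which the BCD stationarity results you cite assume.
\item The $\{\mathbf{V}_u\}$-block of Theorem~\ref{thm:V_update_short} ignores the constraint altogether.
\item Its frozen version in \eqref{eq:m_qcqp_unified} does not guarantee that $\mathbf{m}^{(t+1)}$ is truly feasible.
\end{itemize}
Blockwise KKT conditions therefore yield KKT conditions of the original problem only when that constraint is inactive at the limit.
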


\begin{proof}
Each block update minimizes $J$ with respect to one variable block while keeping the other blocks fixed.
Therefore the objective value cannot increase after any single block update, and a telescoping inequality chain follows.
\begin{equation}
\begin{aligned}
J^{(t+1)}
&= J\!\left(\mathbf{m}^{(t+1)},\{\mathbf{V}_u^{(t+1)}\},\{g_{k,u}^{(t+1)},w_{k,u}^{(t+1)}\}\right)\\
&\le J\!\left(\mathbf{m}^{(t)},\{\mathbf{V}_u^{(t+1)}\},\{g_{k,u}^{(t+1)},w_{k,u}^{(t+1)}\}\right)\\
&\le J\!\left(\mathbf{m}^{(t)},\{\mathbf{V}_u^{(t)}\},\{g_{k,u}^{(t+1)},w_{k,u}^{(t+1)}\}\right)\\
&\le J\!\left(\mathbf{m}^{(t)},\{\mathbf{V}_u^{(t)}\},\{g_{k,u}^{(t)},w_{k,u}^{(t+1)}\}\right)\\
&\le J\!\left(\mathbf{m}^{(t)},\{\mathbf{V}_u^{(t)}\},\{g_{k,u}^{(t)},w_{k,u}^{(t)}\}\right)
= J^{(t)}.
\end{aligned}
\label{eq:convergence_chain_long}
\end{equation}

Since $\sigma_{k,u}^2>0$, the MMSE satisfies $e_{k,u}>0$, and the function $w_{k,u}e_{k,u}-\ln w_{k,u}$ is bounded below over $w_{k,u}>0$.
Hence $J^{(t)}$ is lower bounded and the monotone sequence converges.
Standard block coordinate descent arguments then imply that any accumulation point satisfies blockwise optimality, which yields stationarity.
\end{proof}

\begin{algorithm}[t]
\caption{WMMSE BCD for joint $\{\mathbf{V}_u\}$ and amplitude hologram $\mathbf{m}$}
\label{alg:wmmse_bcd_rhs}
\begin{algorithmic}[1]
\REQUIRE $\{\mathbf{h}_{k,u}\}$, $\{\mathbf{F}_u,\boldsymbol{\Xi}_u\}$, $\{\sigma_{k,u}^2\}$, $P_{\mathrm{BS}}$, box constraint $0\le \mathbf{m}\le \mathbf{1}$, optional $P_{\mathrm{RHS}}$.
\ENSURE $\mathbf{m}$, $\{\mathbf{V}_u\}$.
\STATE Initialize $\mathbf{m}^{(0)}\in[0,1]^N$ and $\{\mathbf{V}_u^{(0)}\}$ with $\sum_u \mathrm{Tr}(\mathbf{V}_u^{(0)}\mathbf{V}_u^{(0)H})\le P_{\mathrm{BS}}$.
\FOR{$t=0,1,\ldots,T-1$}
\STATE Evaluate $\mathbf{M}_u(\mathbf{m}^{(t)})=\big(\mathbf{I}-\mathbf{D}(\mathbf{m}^{(t)})\boldsymbol{\Xi}_u\big)^{-1}\mathbf{D}(\mathbf{m}^{(t)})\mathbf{F}_u$ and $\overline{\mathbf{h}}_{k,u}^{(t)}=\mathbf{h}_{k,u}\mathbf{M}_u(\mathbf{m}^{(t)})$.
\STATE Update $\{g_{k,u}^{(t+1)}\}$ and $\{w_{k,u}^{(t+1)}\}$ using the MMSE closed forms.
\STATE Update $\{\mathbf{V}_u^{(t+1)}\}$ by the KKT closed form with a bisection search on $\lambda^{(t+1)}$ to satisfy $\sum_u \mathrm{Tr}(\mathbf{V}_u\mathbf{V}_u^{H})\le P_{\mathrm{BS}}$.
\STATE Jacobian-aided update $\mathbf{m}^{(t+1)}$ by solving \eqref{eq:m_qcqp_unified} using projected gradient and projection onto the feasible set.
\ENDFOR
\STATE Output $\mathbf{m}^{(T)}$ and $\{\mathbf{V}_u^{(T)}\}$.
\end{algorithmic}
\end{algorithm}

\subsection{Jacobian-Aided Coupling-Consistent Hologram Update}
\label{subsec:jacobian_m_update}
Since the WSR--WMMSE equivalence is standard, we focus on a coupling-consistent hologram update
that accounts for the dependence of
$
\mathbf{C}_u(\mathbf{m})\triangleq(\mathbf{I}-\mathbf{D}(\mathbf{m})\boldsymbol{\Xi}_u)^{-1}
$
on the hologram amplitude \(\mathbf{m}\).
The coupling-freeze step in Sec.~\ref{subsec:m_update} uses
$\mathbf{M}_u(\mathbf{m})\approx \mathbf{C}_u^{(t)}\mathbf{D}(\mathbf{m})\mathbf{F}_u$,
which ignores \(\partial \mathbf{C}_u/\partial \mathbf{m}\) and may degrade under strong mutual coupling.
We instead adopt a first-order coupling-consistent surrogate (derivations in Appendix~\ref{app:jacobian_proof}).

\subsubsection{Coupling-consistent first-order surrogate}
Recall
$
\mathbf{M}_u(\mathbf{m})=\mathbf{C}_u(\mathbf{m})\mathbf{D}(\mathbf{m})\mathbf{F}_u
$
with
$
\mathbf{C}_u(\mathbf{m})=(\mathbf{I}-\mathbf{D}(\mathbf{m})\boldsymbol{\Xi}_u)^{-1}.
$
By implicit differentiation, the variation of \(\mathbf{M}_u\) obeys
$
\mathrm{d}\mathbf{M}_u
=
\mathbf{C}_u(\mathrm{d}\mathbf{D})
\Big(\boldsymbol{\Xi}_u\mathbf{M}_u+\mathbf{F}_u\Big),
$
which captures not only the direct scaling via \(\mathbf{D}(\mathbf{m})\) but also the coupling-induced feedback term
\(\boldsymbol{\Xi}_u\mathbf{M}_u\).
At iteration \(t\), let \(\mathbf{m}^{(t)}\), \(\mathbf{C}_u^{(t)}=\mathbf{C}_u(\mathbf{m}^{(t)})\),
\(\mathbf{M}_u^{(t)}=\mathbf{M}_u(\mathbf{m}^{(t)})\), and \(\Delta\mathbf{m}\triangleq \mathbf{m}-\mathbf{m}^{(t)}\).
Define
$
\mathbf{T}_u^{(t)} \triangleq \boldsymbol{\Xi}_u\mathbf{M}_u^{(t)}+\mathbf{F}_u,
$
$
\widetilde{\mathbf{M}}_u(\mathbf{m}\,|\,\mathbf{m}^{(t)})
\triangleq
\mathbf{M}_u^{(t)}+\mathbf{C}_u^{(t)}\mathbf{D}(\Delta\mathbf{m})\mathbf{T}_u^{(t)} .
$

\subsubsection{Convex quadratic WMSE and QCQP update}
Using \(\widetilde{\mathbf{M}}_u\), define the surrogate scalar response
$
z_{k,i,u}(\mathbf{m})\triangleq
\mathbf{h}_{k,u}\widetilde{\mathbf{M}}_u(\mathbf{m}|\mathbf{m}^{(t)})\mathbf{v}_{i,u}
$
and \(z_{k,i,u}^{(t)}\triangleq \mathbf{h}_{k,u}\mathbf{M}_u^{(t)}\mathbf{v}_{i,u}\).
Let
$
\mathbf{r}_{k,u}^{(t)}\triangleq \mathbf{h}_{k,u}\mathbf{C}_u^{(t)}\in\mathbb{C}^{1\times N}
$,
$
\mathbf{f}_{i,u}^{(t)}\triangleq \mathbf{T}_u^{(t)}\mathbf{v}_{i,u}\in\mathbb{C}^{N\times 1}
$,
and
$
\mathbf{a}_{k,i,u}^{(t)}\triangleq
\left(\big(\mathbf{r}_{k,u}^{(t)}\big)^{T}\odot \mathbf{f}_{i,u}^{(t)}\right)^{*},
$
$
z_{k,i,u}(\mathbf{m})
=
z_{k,i,u}^{(t)}+\big(\mathbf{a}_{k,i,u}^{(t)}\big)^{H}\Delta\mathbf{m}.
$
Substituting it into the WMSE yields a convex quadratic function of
\(\Delta\mathbf{m}\in\mathbb{R}^{N}\) (details in Appendix~\ref{app:jacobian_proof}):
$
\sum_{u,k} w_{k,u} e_{k,u}
=
\Delta\mathbf{m}^{T}\mathbf{Q}^{(t)}\Delta\mathbf{m}
-2\Re\!\left\{\big(\mathbf{q}^{(t)}\big)^{H}\Delta\mathbf{m}\right\}
+\mathrm{const},
$
where
\begin{align}
\mathbf{Q}^{(t)} &\triangleq
\sum_{u,k} w_{k,u}|g_{k,u}|^{2}\sum_{i=1}^{K}
\Re\!\left(\mathbf{a}_{k,i,u}^{(t)}\big(\mathbf{a}_{k,i,u}^{(t)}\big)^{H}\right)\succeq \mathbf{0},
\\
\mathbf{q}^{(t)} &\triangleq
\sum_{u,k} w_{k,u}\left(
g_{k,u}^{*}\mathbf{a}_{k,k,u}^{(t)}
-|g_{k,u}|^{2}\sum_{i=1}^{K} z_{k,i,u}^{(t)}\,\mathbf{a}_{k,i,u}^{(t)}
\right).
\end{align}
Moreover, under the surrogate \(\widetilde{\mathbf{M}}_u\), the RHS transmit-power term admits a convex
quadratic form in \(\Delta\mathbf{m}\) (Appendix~\ref{app:jacobian_proof}):
\begin{equation}
\eta\sum_{u=1}^{U}\Big(
\Delta\mathbf{m}^{T}\mathbf{R}_u^{(t)}\Delta\mathbf{m}
+2\,\mathbf{b}_u^{(t)T}\Delta\mathbf{m}
+c_u^{(t)}
\Big)
\le P_{\mathrm{RHS}}.
\label{eq:jac_rhs_power_sur}
\end{equation}
Hence, the hologram update reduces to the following convex QCQP:
\begin{equation}
\begin{aligned}
\min_{\Delta\mathbf{m}\in\mathbb{R}^{N}}~~&
\Delta\mathbf{m}^{T}\mathbf{Q}^{(t)}\Delta\mathbf{m}
-2\Re\!\left\{\big(\mathbf{q}^{(t)}\big)^{H}\Delta\mathbf{m}\right\} \\
\mathrm{s.t.}~~&
-\mathbf{m}^{(t)}\le \Delta\mathbf{m}\le \mathbf{1}-\mathbf{m}^{(t)},\\
&\eqref{eq:jac_rhs_power_sur}.
\end{aligned}
\label{eq:jac_qcqp}
\end{equation}
The problem \eqref{eq:jac_qcqp} can be efficiently solved by projected gradient with the same order
complexity as Sec.~\ref{subsec:m_update}. We denote the resulting joint design as CA-Joint-Jac.

The overall WMMSE based block coordinate procedure is summarized in Algorithm~\ref{alg:wmmse_bcd_rhs}.

\section{Simulation}
\subsection{Simulation setup}
\subsubsection{Verification for the proposed coupling model}
We validate the proposed mutual coupling model against full-wave results using Meep \footnote{Meep: a free and open-source software package for simulating electromagnetic systems, available at: https://github.com/liyycat/meep.git} with a parallel plate waveguide (PPW) testbed. The length unit is set to 1 mm, and the operating frequency is $f_0$=27.1 GHz (free space wavelength $\lambda_0 =$ 11.06 mm) excited by a narrowband Gaussian source with fractional bandwidth $0.03 f_0$. The simulation cell size is $30 \times 18 \times 22.5$ mm$^3$, terminated by 2 mm perfectly matched layer (PML) on all boundaries. The parallel plate waveguide (PPW) is formed by two conductive plates of thickness 0.25 mm separated by 2.0 mm; the plates are modeled using Meep's conductivity-based metal approximation rather than an ideal perfect electric conductor (PEC). On the top plate, we etch five ``ON-state'' radiating apertures as lightweight equivalents of practical complementary Electric–LC resonator (cELC) RHS elements: each aperture is an air slot of size 0.5 $\times$ 1.2 mm$^2$, repeated with pitch 2.0 mm and centered on the array axis. A vertical electric dipole (z-directed electric-field component (Ez) source) is placed at the PPW mid-plane (0,0,0) to emulate the in-waveguide excitation. The spatial resolution is set to 25 pixels/mm (i.e., spacing 0.04 mm), and Meep’s subpixel averaging is enabled by default to reduce staircase errors at metal air boundaries. A diagram is given by Fig. \ref{fig:diagram}.
\begin{figure}[h]
\centering
\includegraphics[width=0.5\textwidth]{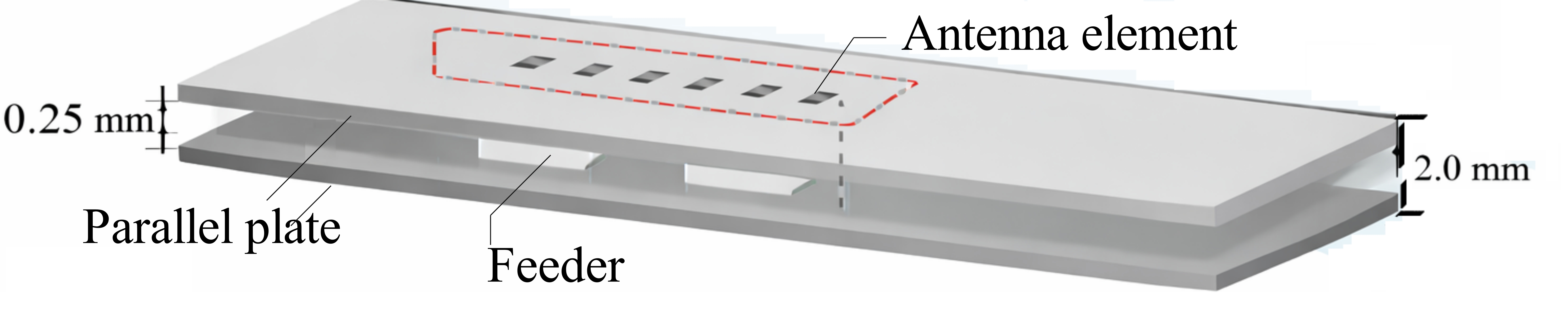}
\caption{Parallel-plate waveguide (PPW) testbed used in Meep for validating the mutual-coupling model: five subwavelength radiating apertures (slot proxies of ON-state cELC RHS elements) are etched on the top conductive plate (thickness 0.25 mm) above a 2.0 mm plate separation.}
\label{fig:diagram}
\end{figure}

This Meep testbed mimics RHS by using a parallel-plate waveguide as the traveling reference-wave feeder and subwavelength top-plate slots as ON-state leaky radiators. Full-wave simulation naturally includes multi-scattering mutual coupling, enabling beampattern validation, though circuit-level PIN/resonance details are abstracted.

\begin{table}[t]
\centering
\caption{Simulation parameters for beamforming.}
\label{tab:sim_params}
\begin{tabular}{|c c|}
\hline
Parameter & Value \\
\hline
Carrier frequency $f_c$ & 28 GHz \\
Total bandwidth $B$ & 1 GHz \\
Number of subbands $U$ & 8 \\
Subband bandwidth $B_g$ & 125 MHz \\
Subband center frequencies $f_u$ (GHz) & $\{$27.56, 27.68, $\ldots$, 28.31, 28.43$\}$ \\
Medium permeability $\mu$ (H/m) & $4\pi\times10^{-7}$ \\
Medium permittivity $\epsilon$ (F/m) & $8.854187817\times10^{-12}$ \\
Wave speed $c_0$ (m/s) & $2.99792458\times10^{8}$ \\
Array type & ULA \\
Number of RHS elements $N$ & 32 \\
Element spacing $d$ & 2.68 mm \\
Number of feeders $L$ & 4 \\
Feeder spacing & 10.70 mm \\
Dipole orientation $\hat{\mathbf{e}}_{\mathrm{m}}$ & $[0,0,1]^T$ \\
Number of users $K$ & 4 \\
User distances $r_k$ (m) & [3.0, 4.5, 6.0, 7.5] \\
User angles $\theta_k$ (deg) & [75, 85, 95, 105] \\
Absorption coefficient $\kappa_{\mathrm{abs}}(f_u)$ & 0.1 \\
BS power budget $P_{\mathrm{BS}}$ & \{2, 5, 10, 20\} \\
RHS power budget $P_{\mathrm{RHS}}$, efficiency $\eta$ & 50, 1 \\
Noise power $\sigma^2$ & 1 \\
Waveguide coupling strength $\xi_{\mathrm{wg}}$ & 0.02\\
Waveguide attenuation $\alpha_{\mathrm{wg}}$ & 0.15\\
Waveguide factor $\beta_{\mathrm{wg}}$ & 1.0\\
Stop threshold & 1e-4 \\
Max iterations $T$ & 100 \\
step size & 0.05 \\
\hline
\end{tabular}
\end{table}

\subsubsection{Verification for the proposed coupling-aware RHS beamforming for multi-user sum rate maximization}
The simulation setup follows Table~\ref{tab:sim_params}. For each Monte Carlo run, we generate a wideband RHS MU-MIMO scenario with $U$ subbands, $N$ RHS elements, $L$ feeders, and $K$ singleantenna users. The true mutual coupling matrix is constructed by the parametric coupling model, and all schemes are evaluated under the same true coupling and channel realizations for fair comparison. The proposed algorithm adopts a WMMSE-BCD procedure that alternates among updating the MMSE receiver/weights, updating the digital precoders under the BS power constraint, and updating the RHS hologram vector via projected gradient descent (when applicable). Performance is measured by the achieved sum spectral efficiency (equivalently sum rate over bandwidth), while the RHS loaded power and the surrogate objective $J= \sum_{u, k}\left(1-\ln \left(1+\operatorname{SINR}_{k, u}\right)\right)$ are also recorded to illustrate convergence. We compare six baselines: 
\begin{itemize}
  \item {CA-Joint}: coupling-aware joint optimization of the RHS hologram $\mathbf{m}$ and the digital precoders $\{\mathbf{V}_u\}$ via WMMSE-BCD, using the true coupling model in the design.
  \item {CU-Joint}: coupling-unaware joint optimization of $\mathbf{m}$ and $\{\mathbf{V}_u\}$ assuming zero coupling during the design, while performance is evaluated under the true coupling.
  \item {CA-Joint-Jac}: coupling-aware joint optimization of the RHS hologram $\mathbf{m}$ and the digital precoders $\{\mathbf{V}_u\}$ via WMMSE-BCD, with Jacobian-aided surrogate \eqref{eq:M_tilde} instead of frozening $C_u$.
  \item {Holo+WMMSE}: fixed amplitude-only hologram $\mathbf{m}$ (holographic pattern), and only $\{\mathbf{V}_u\}$ is optimized by WMMSE under the true coupling.
  \item {Uniform+WMMSE}: fixed uniform $\mathbf{m}$, and only $\{\mathbf{V}_u\}$ is optimized by WMMSE under the true coupling.
  \item {Holo+ZF}: fixed hologram $\mathbf{m}$, and $\{\mathbf{V}_u\}$ is designed by subband-wise zero-forcing (ZF) precoding under the true coupling.
  \item {Uniform+ZF}: fixed uniform $\mathbf{m}$, and $\{\mathbf{V}_u\}$ is designed by subband-wise ZF precoding under the true coupling.
\end{itemize}

\subsection{Simulation Results}
\label{sec:sim_results}
\subsubsection{Coupling model verification}

\begin{figure}[h]
\centering
\includegraphics[width=0.5\textwidth]{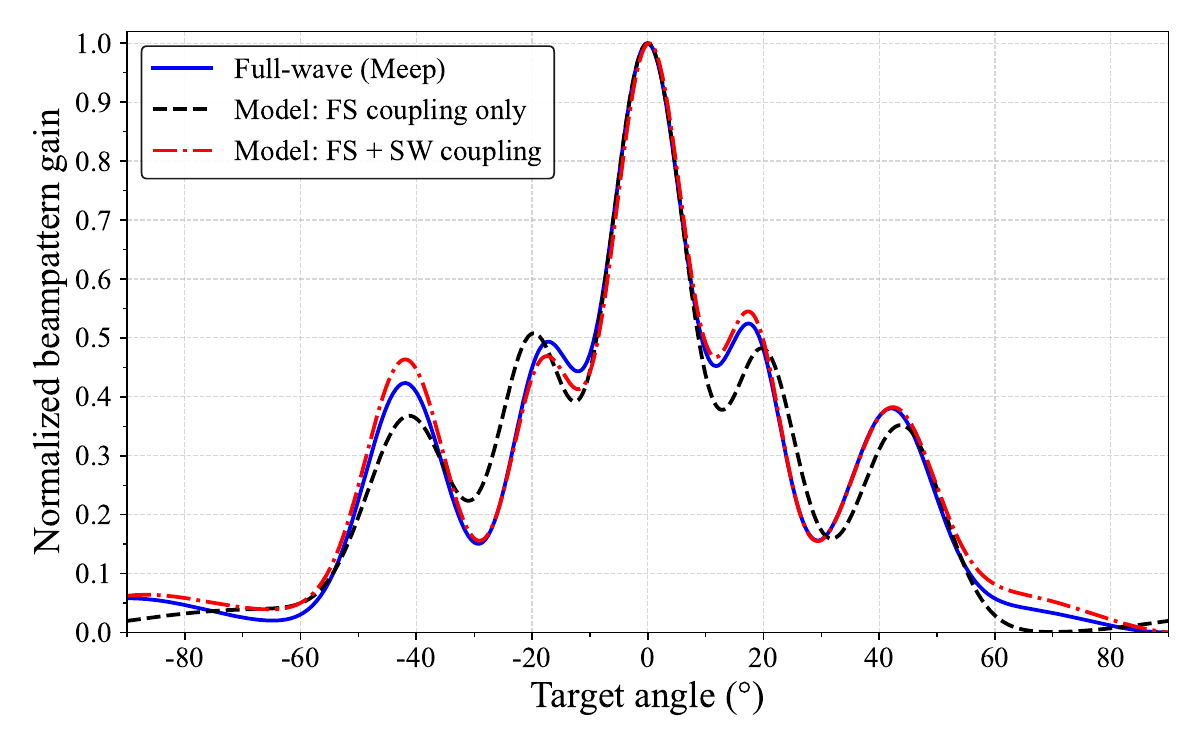}
\caption{Beampattern validation on the $90^{\circ}$ cut: full-wave (Meep) versus coupling-aware models (FS-only and FS+SW) for a 5-element cELC-based RHS on a parallel-plate waveguide.}
\label{fig:coupling_verf}
\end{figure}

\begin{figure*}[h]
\centering
\includegraphics[width=\textwidth]{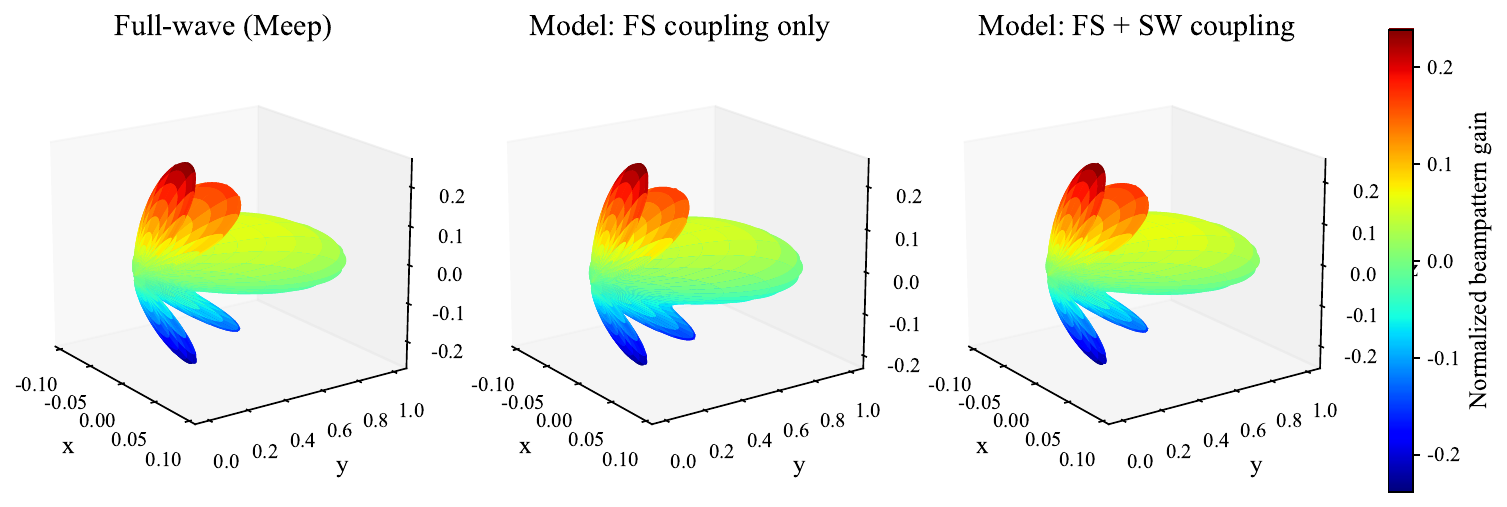}
\caption{3D normalized beampattern comparison: full-wave (Meep) and two coupled-dipole predictions (FS-only and FS+SW) under the same waveguidefed 5-element RHS configuration.}
\label{fig:coupling_verf_3d}
\end{figure*}

Fig. \ref{fig:coupling_verf} and \ref{fig:coupling_verf_3d} follows the designed validation setup (five ON-state cELC-like apertures etched on a parallel-plate waveguide, excited by a vertical in-waveguide dipole at 27.1 GHz, with the far-field extracted on the $90^\circ$ cut and normalized): the 2D cut confirms that the proposed FS+SW coupling model reproduces the full-wave main-lobe direction/width and the four dominant sidelobes, while the FS-only model mispredicts sidelobe heights and null depths; the corresponding 3D patterns further show that ignoring surface coupling distorts the off-boresight energy distribution, whereas including FS+SW coupling preserves the full-wave directional energy confinement and sidelobe geometry, validating the necessity of coupling-consistent modeling under dense spacing.

\subsubsection{Sum-rate convergence} Fig.~\ref{fig:sum_se_convergence} shows the sum rate $R$ versus the outer iteration index. The proposed joint designs (CA-Joint and CU-Joint) monotonically improve the achievable rate because each iteration updates the digital precoders via WMMSE and (for joint schemes) refines the hologram profile $\mathbf{m}$, thereby improving the effective channel quality and multiuser interference management. In contrast, the ZF baselines (Holo+ZF and Uniform+ZF) are essentially one-shot designs with fixed $\mathbf{m}$ and linear precoding; hence their curves remain almost constant across iterations. For the fixed-$\mathbf{m}$ WMMSE baselines, Uniform+WMMSE saturates quickly since the uniform hologram provides limited spatial degrees of freedom, and only the digital precoder can be optimized. Notably, Holo+WMMSE achieves the highest rate in this setting, indicating that the simple amplitude-only hologram provides a favorable initial effective channel, and optimizing only $\{\mathbf{V}_u\}$ is sufficient to exploit it. The joint schemes improve more gradually because they must balance between enhancing beamforming gain and maintaining feasibility under coupling and power constraints.
\begin{figure}[h]
\centering
\includegraphics[width=0.5\textwidth]{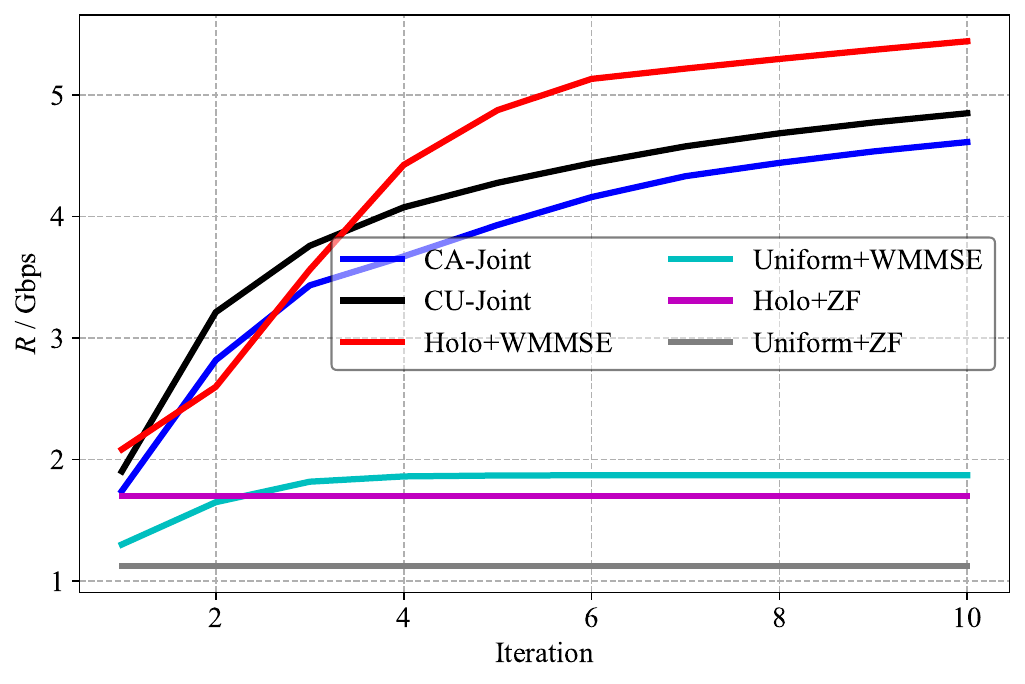}
\caption{Convergence curves of sum rate.}
\label{fig:sum_se_convergence}
\end{figure}

\subsubsection{Surrogate objective convergence}
Fig.~\ref{fig:surrogate_objective_convergence} plots the surrogate objective $J=\sum_{u,k}\big(1-\ln(1+\mathrm{SINR}_{k,u})\big)$, which is a monotone decreasing function of the true SINR. Therefore, a decreasing $J$ indicates improved link quality. Consistent with Fig.~X, the WMMSE-based schemes exhibit clear descent behavior, whereas the ZF baselines remain flat due to the absence of iterative refinement. Among the iterative methods, Holo+WMMSE yields the fastest decrease and the lowest final $J$, matching its superior sum-rate performance. This suggests that, with a well-chosen fixed hologram, digital WMMSE can effectively suppress multiuser interference without requiring hologram adaptation. The joint designs (CA-Joint and CU-Joint) also reduce $J$ but at a slower pace, since updating $\mathbf{m}$ under coupling introduces additional nonconvexity and model sensitivity; consequently, the algorithm trades off between improving SINR and preserving stability/feasibility of the coupled RHS response. Moreover, CU-Joint can appear slightly better than CA-Joint in terms of $J$ in finite iterations, which may occur when the coupling-aware hologram update becomes numerically more conservative (or locally trapped) due to the stronger nonlinearity induced by the true coupling model.
\begin{figure}[h]
\centering
\includegraphics[width=0.5\textwidth]{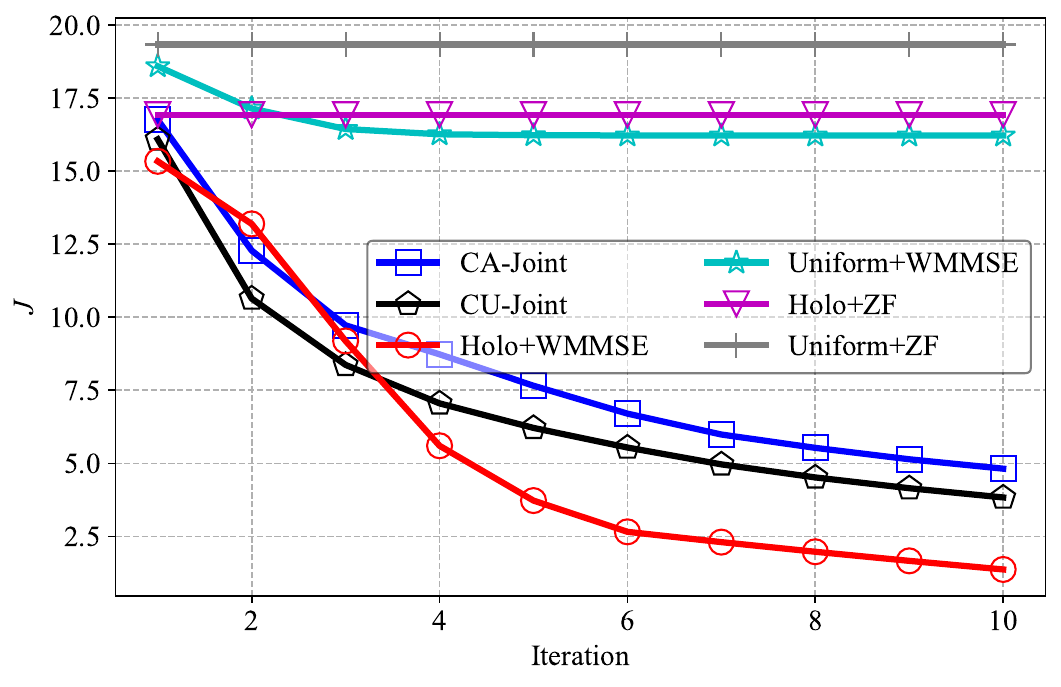}
\caption{Convergence curves of surrogate objective.}
\label{fig:surrogate_objective_convergence}
\end{figure}

\subsubsection{RHS loaded power evolution}
Fig.~\ref{fig:rhs_power_convergence} reports the RHS loaded power $P_{\mathrm{RHS}}$ over iterations. The joint schemes maintain $P_{\mathrm{RHS}}$ at a relatively low and stable level because the hologram update explicitly accounts for the RHS power constraint, and the digital precoder update is performed under a fixed BS power budget. In contrast, Holo+WMMSE exhibits a steadily increasing $P_{\mathrm{RHS}}$ as iterations proceed: although the BS transmit power is constrained, optimizing $\{\mathbf{V}_u\}$ for higher rate tends to concentrate energy into the effective RHS aperture (through $\mathbf{M}_u(\mathbf{m})$), thereby increasing the loaded power when $\mathbf{m}$ is fixed and not optimized for power efficiency. The ZF baselines remain nearly constant since both $\mathbf{m}$ and $\{\mathbf{V}_u\}$ are fixed after the one-shot design. These results highlight the importance of joint hologram--precoder optimization: adapting $\mathbf{m}$ enables controlling the RHS power consumption while still improving spectral efficiency, whereas V-only optimization may lead to excessive RHS loading even if the BS power is limited.

\begin{figure}[h]
\centering
\includegraphics[width=0.5\textwidth]{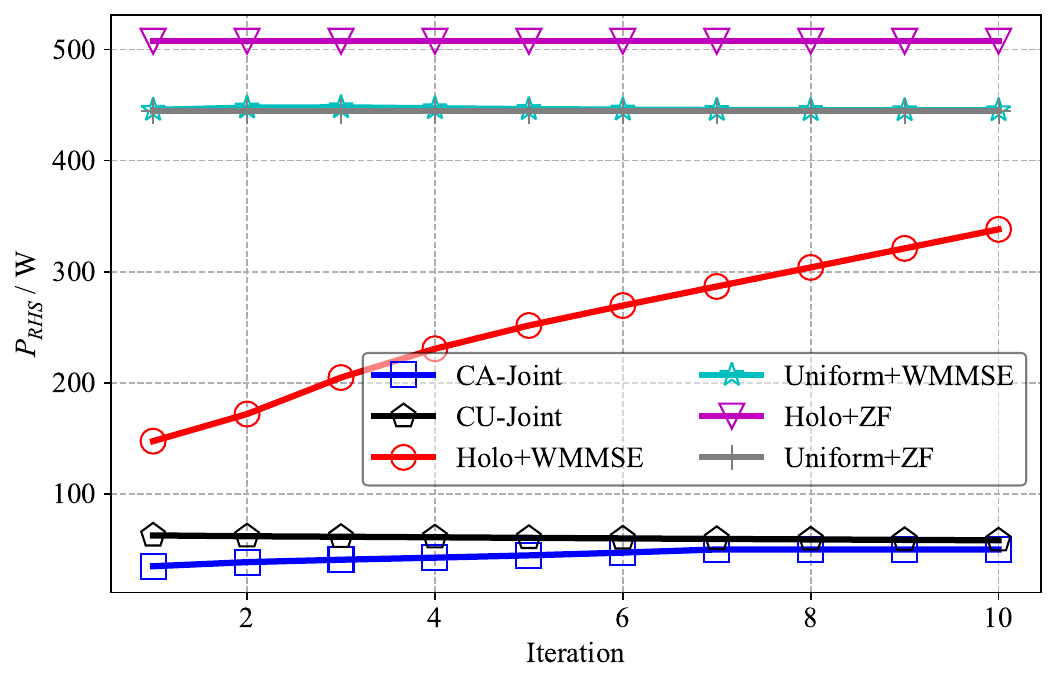}
\caption{Constraint tracking of RHS power.}
\label{fig:rhs_power_convergence}
\end{figure}

\subsubsection{Sum rate versus BS transmit power $P_{\mathrm{BS}}$}
Fig.~\ref{fig:sum_se_vs_pbs} depicts the achievable sum rate as a function of the BS power budget $P_{\mathrm{BS}}$. As expected, increasing $P_{\mathrm{BS}}$ generally improves the sum rate since the digital precoders can allocate more transmit energy to the effective RHS aperture. The fixed-$\mathbf{m}$ baselines exhibit distinct behaviors: {Uniform+WMMSE} increases steadily but remains significantly lower due to the limited beamforming capability of a uniform hologram, whereas {Holo+WMMSE} benefits much more from power scaling and achieves the highest rate at large $P_{\mathrm{BS}}$, indicating that the amplitude-only hologram provides a favorable effective channel that can be fully exploited by WMMSE precoding. In contrast, the joint designs \textit{CA-Joint} and {CU-Joint} show only marginal gains and even a slight decrease at high $P_{\mathrm{BS}}$, which suggests that the joint update of $\mathbf{m}$ (together with coupling and power-feasibility constraints) becomes the performance bottleneck: the algorithm tends to favor stable/feasible hologram profiles rather than aggressively increasing array gain, leading to saturation in the achievable rate within the considered iteration budget.

\begin{figure}[h]
\centering
\includegraphics[width=0.5\textwidth]{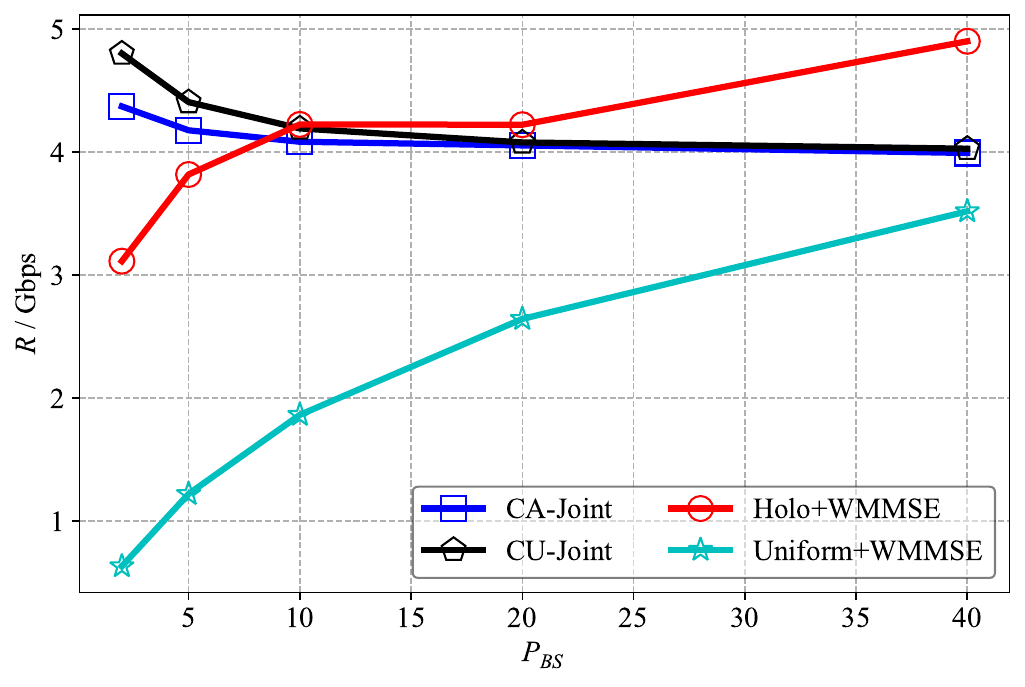}
\caption{Sum rate versus BS transmit power $P_\mathrm{BS}$.}
\label{fig:sum_se_vs_pbs}
\end{figure}

\subsubsection{Sum rate versus coupling strength $\xi_{\mathrm{fs}}$}
Fig.~\ref{fig:sum_se_vs_coupling} reports the sum rate under different near-field coupling strengths $\xi_{\mathrm{fs}}$. The \textit{Uniform+WMMSE} curve stays low and only slightly increases with $\xi_{\mathrm{fs}}$, reflecting that a uniform hologram cannot effectively leverage additional coupling-induced degrees of freedom. For \textit{Holo+WMMSE}, the sum rate is relatively insensitive to moderate coupling but decreases at strong coupling, because the fixed hologram is not optimized to mitigate coupling-induced distortion, and the digital precoder alone cannot fully compensate for the resulting mismatch in the effective channel. The joint schemes exhibit improved robustness: \textit{CA-Joint} remains stable across the sweep since the hologram update explicitly accounts for coupling in the design, while \textit{CU-Joint} may slightly outperform CA-Joint in this finite-iteration setting because ignoring coupling during optimization yields a smoother and less conservative update of $\mathbf{m}$; however, as coupling becomes stronger, the model mismatch in CU-Joint becomes more pronounced and may eventually limit its performance.

\begin{figure}[h]
\centering
\includegraphics[width=0.5\textwidth]{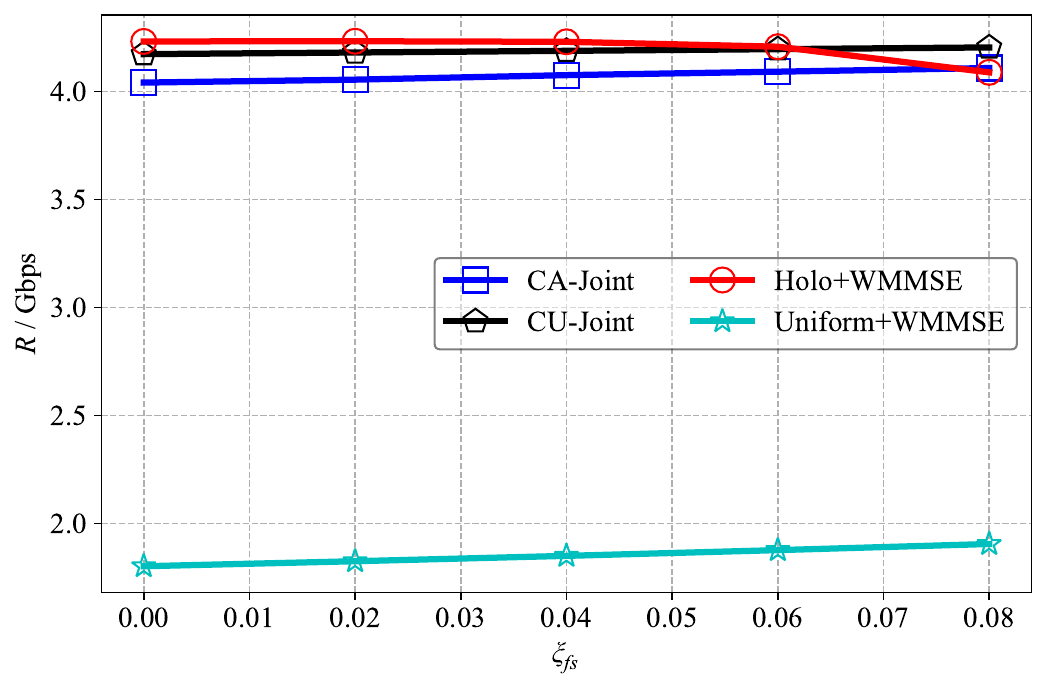}
\caption{Sum rate versus coupling strength.}
\label{fig:sum_se_vs_coupling}
\end{figure}

\subsubsection{Sum rate versus the number of RHS elements $N$}
Fig.~\ref{fig:sum_se_vs_rhs_size} shows the sum rate as a function of the RHS aperture size $N$. The joint designs \textit{CA-Joint} and \textit{CU-Joint} improve with $N$ because a larger RHS provides higher spatial resolution and more degrees of freedom for multiuser beamforming, enabling better interference suppression. The \textit{Uniform+WMMSE} baseline exhibits only modest gains since a uniform hologram does not exploit the additional aperture efficiently. In contrast, \textit{Holo+WMMSE} shows a strong dependence on $N$ and becomes the best-performing scheme for large apertures, demonstrating that a properly structured fixed hologram can already capture most of the array gain when combined with WMMSE digital precoding. The slight non-monotonic behavior of CA-Joint at large $N$ can be attributed to the increased nonconvexity and sensitivity of the hologram optimization in the presence of coupling, where the projected-gradient update may require more iterations or careful step-size tuning to fully benefit from the enlarged aperture.

\begin{figure}[h]
\centering
\includegraphics[width=0.5\textwidth]{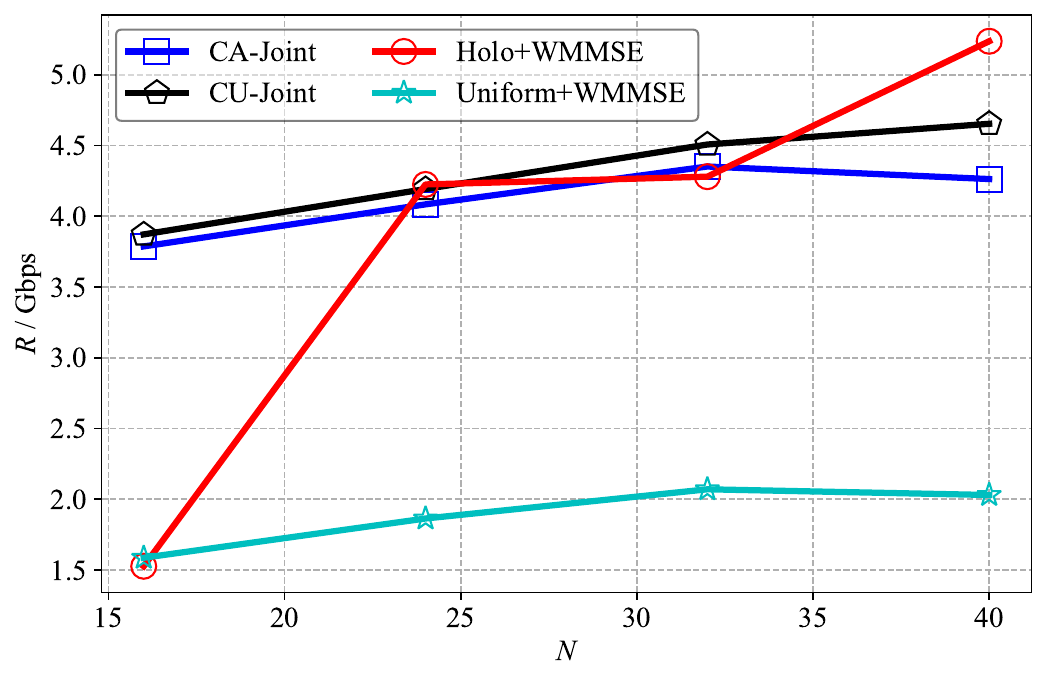}
\caption{Sum rate versus the number of RHS elements $N$.}
\label{fig:sum_se_vs_rhs_size}
\end{figure}

\begin{figure*}[t]
\centering
\includegraphics[width=0.48\textwidth]{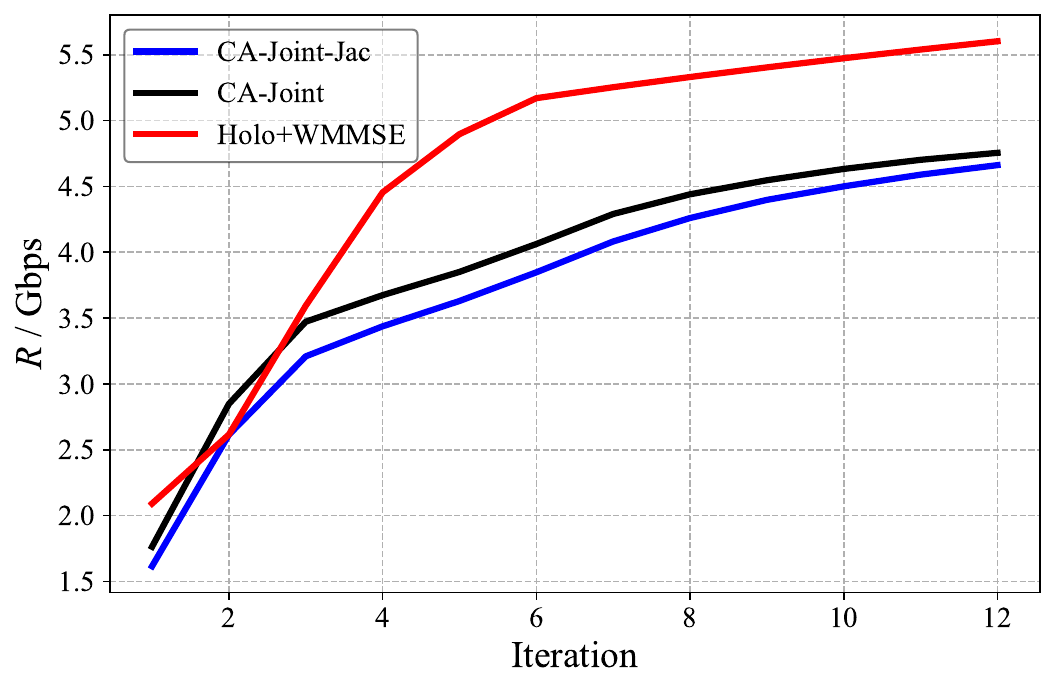}
\hfill
\includegraphics[width=0.48\textwidth]{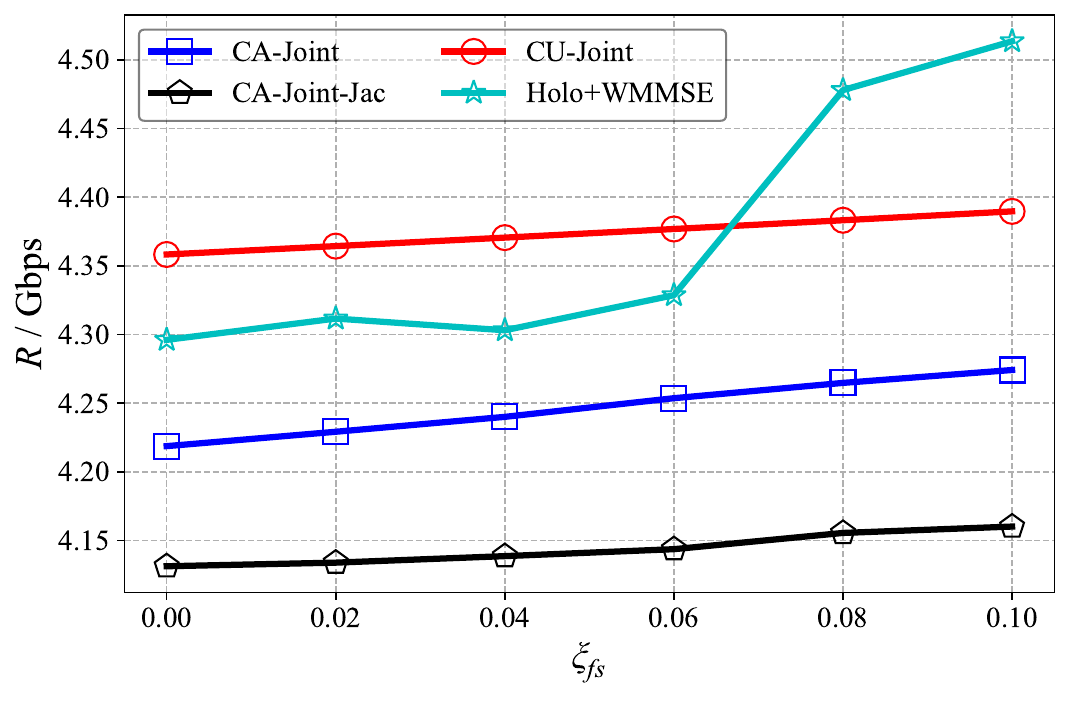}
\caption{Innovation validation of Jacobian-aided coupling-consistent hologram update.
Left: convergence under stronger coupling. Right: sum rate versus coupling strength \(\xi_{\mathrm{fs}}\).}
\label{fig:jac_innovation}
\end{figure*}

\subsubsection{Validating the Jacobian-aided update}
We add a baseline \textit{CA-Joint-Jac} that replaces the coupling-freeze hologram step by \eqref{eq:jac_qcqp}.
Fig.~\ref{fig:jac_innovation} (left) shows faster and higher-rate convergence under stronger coupling,
and Fig.~\ref{fig:jac_innovation} (right) shows improved robustness versus \(\xi_{\mathrm{fs}}\).

\section{Conclusion}
This work studies coupling-aware wideband MU-MIMO aided by RHSs and proposes a coupling-consistent joint hologram–precoder design. Modeling RHS elements as equivalent magnetic dipoles, we derive a frequency-dependent mutual-coupling matrix for subband-wise wideband evaluation and decompose it into free-space near-field and guided/surface-wave components for interpretability. We then maximize wideband sum rate over the holographic amplitude pattern and digital precoders under feeder sum-power and RHS excitation-power constraints. The resulting coupled nonconvex problem is solved via a WMMSE-BCD algorithm: digital precoders have a KKT closed-form update with a single bisection-searched dual variable. For the hologram, we propose a Jacobian-aided coupling-consistent update by implicitly differentiating the coupled RHS operator and building a first-order SCA surrogate that retains coupling-feedback sensitivity, yielding a convex quadratic/QCQP subproblem solvable by projected first-order methods. Simulations at 28 GHz with 1 GHz bandwidth validate fast convergence and improved robustness to coupling strength and aperture scaling. Future work includes measurement-driven coupling calibration, scalable large-aperture designs, and hardware-constrained holograms under imperfect CSI and multicell interference.

\bibliographystyle{Bibliography/IEEEtranTIE}
\bibliography{Bibliography/IEEEabrv,Bibliography/mybibfile.bib} %IEEEabrv instead of IEEEfull

\appendices

\section{Derivation of Eq.~\eqref{eq:H}}
\label{appA}
For $R_{n'n}\neq 0$, consider the standard Helmholtz kernel
$\frac{e^{-jk_u R_{n'n}}}{4\pi R_{n'n}}$ and apply the vector identity
$\nabla\times\nabla\times(\hat{\mathbf{e}}_{\mathrm{m}}\psi)
=\nabla(\hat{\mathbf{e}}_{\mathrm{m}}\cdot\nabla\psi)-\hat{\mathbf{e}}_{\mathrm{m}}\nabla^2\psi$
(valid since $\hat{\mathbf{e}}_{\mathrm{m}}$ is constant), with
$\psi=\frac{e^{-jk_u R_{n'n}}}{4\pi R_{n'n}}$.
Using $\nabla_{\mathbf{r}_{n'}} R_{n'n}=\widehat{\mathbf{R}}_{n'n}$,
$\nabla_{\mathbf{r}_{n'}}\widehat{\mathbf{R}}_{n'n}
=\frac{1}{R_{n'n}}\big(\mathbf{I}-\widehat{\mathbf{R}}_{n'n}\widehat{\mathbf{R}}_{n'n}^T\big)$,
and the scalar relations (for $R_{n'n}\neq 0$)
$
\nabla_{\mathbf{r}_{n'}}\psi
=\frac{e^{-jk_u R_{n'n}}}{4\pi}\Big(-\frac{1}{R_{n'n}^2}+\frac{jk_u}{R_{n'n}}\Big)\widehat{\mathbf{R}}_{n'n},
$
$
\nabla_{\mathbf{r}_{n'}}^2\psi=-k_u^2\psi,
$
one obtains after straightforward algebra
\begin{equation}
\begin{aligned}
&\nabla_{\mathbf{r}_{n'}}(\hat{\mathbf{e}}_{\mathrm{m}}\cdot\nabla_{\mathbf{r}_{n'}}\psi)
=
\frac{e^{-jk_u R_{n'n}}}{4\pi}
\Big(\frac{1}{R_{n'n}^2}-\frac{jk_u}{R_{n'n}}\Big)
\\
&\frac{3\widehat{\mathbf{R}}_{n'n}(\widehat{\mathbf{R}}_{n'n}\cdot\hat{\mathbf{e}}_{\mathrm{m}})
-\hat{\mathbf{e}}_{\mathrm{m}}}{R_{n'n}} - \frac{e^{-jk_u R_{n'n}}}{4\pi}\,k_u^2\,
\frac{\widehat{\mathbf{R}}_{n'n}\times(\hat{\mathbf{e}}_{\mathrm{m}}\times\widehat{\mathbf{R}}_{n'n})}{R_{n'n}}.
\end{aligned}
\end{equation}
Combining this with $-\hat{\mathbf{e}}_{\mathrm{m}}\nabla_{\mathbf{r}_{n'}}^2\psi
= k_u^2 \hat{\mathbf{e}}_{\mathrm{m}}\psi
= \frac{e^{-jk_u R_{n'n}}}{4\pi}k_u^2\frac{\hat{\mathbf{e}}_{\mathrm{m}}}{R_{n'n}}$
and regrouping the $k_u^2$ terms via
$\hat{\mathbf{e}}_{\mathrm{m}}-\widehat{\mathbf{R}}_{n'n}(\widehat{\mathbf{R}}_{n'n}\cdot\hat{\mathbf{e}}_{\mathrm{m}})
=\widehat{\mathbf{R}}_{n'n}\times(\hat{\mathbf{e}}_{\mathrm{m}}\times\widehat{\mathbf{R}}_{n'n})$,
we arrive at Eq.~\eqref{eq:H}.
The coefficient $3$ emerges from $\nabla_{\mathbf{r}_{n'}}(\widehat{\mathbf{R}}_{n'n}\cdot\hat{\mathbf{e}}_{\mathrm{m}})
=\frac{1}{R_{n'n}}\big(\hat{\mathbf{e}}_{\mathrm{m}}
-(\widehat{\mathbf{R}}_{n'n}\cdot\hat{\mathbf{e}}_{\mathrm{m}})\widehat{\mathbf{R}}_{n'n}\big)$
together with the radial derivative of $\psi$, producing
$3\widehat{\mathbf{R}}_{n'n}(\widehat{\mathbf{R}}_{n'n}\cdot\hat{\mathbf{e}}_{\mathrm{m}})-\hat{\mathbf{e}}_{\mathrm{m}}$.

\section{Derivations for the Jacobian-Aided Coupling-Consistent Hologram Update}
\label{app:jacobian_proof}

\subsection{Exact Jacobian of the coupled RHS operator}

We recall the coupling-aware operator
$
\mathbf{C}_u(\mathbf{m})=(\mathbf{I}-\mathbf{D}(\mathbf{m})\boldsymbol{\Xi}_u)^{-1}
$
and
$
\mathbf{M}_u(\mathbf{m})=\mathbf{C}_u(\mathbf{m})\mathbf{D}(\mathbf{m})\mathbf{F}_u
$.
$\mathbf{C}_u(\mathbf{m}) \triangleq \big(\mathbf{I}-\mathbf{D}(\mathbf{m})\boldsymbol{\Xi}_u\big)^{-1}$,
$\mathbf{M}_u(\mathbf{m}) \triangleq \mathbf{C}_u(\mathbf{m})\mathbf{D}(\mathbf{m})\mathbf{F}_u$,
$\mathbf{A}_u(\mathbf{m})\triangleq \mathbf{I}-\mathbf{D}(\mathbf{m})\boldsymbol{\Xi}_u$,
$\mathbf{C}_u(\mathbf{m})=\mathbf{A}_u(\mathbf{m})^{-1}$,
$
\mathrm{d}\mathbf{C}_u
=\mathrm{d}\big(\mathbf{A}_u^{-1}\big)
=-\mathbf{A}_u^{-1}(\mathrm{d}\mathbf{A}_u)\mathbf{A}_u^{-1}
=\mathbf{C}_u\,\mathrm{d}\!\big(\mathbf{D}(\mathbf{m})\boldsymbol{\Xi}_u\big)\,\mathbf{C}_u 
=\mathbf{C}_u\,(\mathrm{d}\mathbf{D})\boldsymbol{\Xi}_u\,\mathbf{C}_u$,
$\mathrm{d}\mathbf{M}_u
=(\mathrm{d}\mathbf{C}_u)\mathbf{D}\mathbf{F}_u+\mathbf{C}_u(\mathrm{d}\mathbf{D})\mathbf{F}_u
=\mathbf{C}_u(\mathrm{d}\mathbf{D})\big(\boldsymbol{\Xi}_u\mathbf{M}_u+\mathbf{F}_u\big)$,
$\mathbf{D}\triangleq \mathbf{D}(\mathbf{m})$.

Let $\mathbf{e}_n$ be the $n$-th standard basis and define $\mathbf{E}_n\triangleq\mathrm{diag}(\mathbf{e}_n)$ so that
$\frac{\partial \mathbf{D}(\mathbf{m})}{\partial m_n}=\mathbf{E}_n$.
$\frac{\partial \mathbf{C}_u}{\partial m_n}=
\mathbf{C}_u\,\mathbf{E}_n\boldsymbol{\Xi}_u\,\mathbf{C}_u$,
$\frac{\partial \mathbf{M}_u}{\partial m_n}=
\mathbf{C}_u\,\mathbf{E}_n\big(\boldsymbol{\Xi}_u\mathbf{M}_u+\mathbf{F}_u\big)$,
 $n=1,\ldots,N$.

At iteration $t$, denote $\mathbf{m}^{(t)}$, $\mathbf{C}_u^{(t)}=\mathbf{C}_u(\mathbf{m}^{(t)})$,
$\mathbf{M}_u^{(t)}=\mathbf{M}_u(\mathbf{m}^{(t)})$, and $\Delta\mathbf{m}\triangleq \mathbf{m}-\mathbf{m}^{(t)}$.
The first-order Taylor expansion yields the surrogate used in Sec.~\ref{subsec:jacobian_m_update}:
\begin{equation}
\begin{aligned}
\mathbf{M}_u(\mathbf{m})
&=
\mathbf{M}_u^{(t)}+\sum_{n=1}^{N}\Delta m_n\,
\frac{\partial \mathbf{M}_u}{\partial m_n}\bigg|_{\mathbf{m}^{(t)}}
+\mathcal{O}(\|\Delta\mathbf{m}\|^2)\\
&=
\mathbf{M}_u^{(t)}+\sum_{n=1}^{N}\Delta m_n\,
\mathbf{C}_u^{(t)}\mathbf{E}_n\Big(\boldsymbol{\Xi}_u\mathbf{M}_u^{(t)}+\mathbf{F}_u\Big)\\
&\qquad+\mathcal{O}(\|\Delta\mathbf{m}\|^2)\\
&=
\mathbf{M}_u^{(t)}+\mathbf{C}_u^{(t)}\mathbf{D}(\Delta\mathbf{m})
\underbrace{\Big(\boldsymbol{\Xi}_u\mathbf{M}_u^{(t)}+\mathbf{F}_u\Big)}_{\triangleq~\mathbf{T}_u^{(t)}}\\
&\qquad+\mathcal{O}(\|\Delta\mathbf{m}\|^2).
\end{aligned}
\label{eq:app_M_taylor}
\end{equation}

\subsection{Quadratic WMSE in $\Delta\mathbf{m}$ and PSD of $\mathbf{Q}^{(t)}$}

Define the affine surrogate operator
$\widetilde{\mathbf{M}}_u(\mathbf{m}\,|\,\mathbf{m}^{(t)})
\triangleq \mathbf{M}_u^{(t)}+\mathbf{C}_u^{(t)}\mathbf{D}(\Delta\mathbf{m})\mathbf{T}_u^{(t)}$.
For each user--stream--subband triple $(k,i,u)$, define
$z_{k,i,u}(\mathbf{m})\triangleq \mathbf{h}_{k,u}\widetilde{\mathbf{M}}_u(\mathbf{m}|\mathbf{m}^{(t)})\mathbf{v}_{i,u}$.
\begin{equation}
\begin{aligned}
z_{k,i,u}(\mathbf{m})
&=
\mathbf{h}_{k,u}\mathbf{M}_u^{(t)}\mathbf{v}_{i,u}
+\mathbf{h}_{k,u}\mathbf{C}_u^{(t)}\mathbf{D}(\Delta\mathbf{m})\mathbf{T}_u^{(t)}\mathbf{v}_{i,u}\\
&\triangleq
z_{k,i,u}^{(t)}
+\mathbf{r}_{k,u}^{(t)}\mathbf{D}(\Delta\mathbf{m})\mathbf{f}_{i,u}^{(t)}\\
&=
z_{k,i,u}^{(t)}+\big(\mathbf{a}_{k,i,u}^{(t)}\big)^H\Delta\mathbf{m},\\
\end{aligned}
\label{eq:app_z_affine_long}
\end{equation}
$z_{k,i,u}^{(t)}\triangleq \mathbf{h}_{k,u}\mathbf{M}_u^{(t)}\mathbf{v}_{i,u}$,
$\mathbf{r}_{k,u}^{(t)}\triangleq \mathbf{h}_{k,u}\mathbf{C}_u^{(t)}$,
$\mathbf{f}_{i,u}^{(t)}\triangleq \mathbf{T}_u^{(t)}\mathbf{v}_{i,u}$,
$\mathbf{a}_{k,i,u}^{(t)}
\triangleq
\Big((\mathbf{r}_{k,u}^{(t)})^T\odot \mathbf{f}_{i,u}^{(t)}\Big)^*$,
$\Delta\mathbf{m}\in\mathbb{R}^N$.

Then $|z_{k,i,u}(\mathbf{m})|^2$ expands as
\begin{equation}
\begin{aligned}
|z_{k,i,u}(\mathbf{m})|^2
&=
\Big|z_{k,i,u}^{(t)}+\big(\mathbf{a}_{k,i,u}^{(t)}\big)^H\Delta\mathbf{m}\Big|^2\\
&=
|z_{k,i,u}^{(t)}|^2
+2\Re\!\left\{(z_{k,i,u}^{(t)})^*
\big(\mathbf{a}_{k,i,u}^{(t)}\big)^H\Delta\mathbf{m}\right\}\\
&\quad
+\Delta\mathbf{m}^T
\Re\!\left(\mathbf{a}_{k,i,u}^{(t)}\big(\mathbf{a}_{k,i,u}^{(t)}\big)^H\right)
\Delta\mathbf{m}.
\end{aligned}
\label{eq:app_absz_expand_long}
\end{equation}

Substituting \eqref{eq:app_absz_expand_long} into the MSE expression
$
e_{k,u}
=
|g_{k,u}|^2\left(\sum_{i=1}^K|z_{k,i,u}|^2+\sigma_{k,u}^2\right)
-2\Re\{g_{k,u}z_{k,k,u}\}+1
$
and collecting terms in $\Delta\mathbf{m}$ yields the quadratic form:
\begin{equation}
\begin{aligned}
e_{k,u}(\mathbf{m})
&=\Delta\mathbf{m}^T\mathbf{Q}_{k,u}^{(t)}\Delta\mathbf{m}
-2\Re\!\left\{\big(\mathbf{q}_{k,u}^{(t)}\big)^H\Delta\mathbf{m}\right\}+c_{k,u}^{(t)},\\
\mathbf{Q}_{k,u}^{(t)}
&\triangleq
|g_{k,u}|^2\sum_{i=1}^{K}
\Re\!\left(\mathbf{a}_{k,i,u}^{(t)}\big(\mathbf{a}_{k,i,u}^{(t)}\big)^H\right),\\
\mathbf{q}_{k,u}^{(t)}
&\triangleq
g_{k,u}^*\mathbf{a}_{k,k,u}^{(t)}
-|g_{k,u}|^2\sum_{i=1}^{K} z_{k,i,u}^{(t)}\,\mathbf{a}_{k,i,u}^{(t)},\\
c_{k,u}^{(t)}
&\triangleq
|g_{k,u}|^2\left(\sum_{i=1}^{K}|z_{k,i,u}^{(t)}|^2+\sigma_{k,u}^2\right)
-2\Re\!\left\{g_{k,u}z_{k,k,u}^{(t)}\right\}+1.
\end{aligned}
\label{eq:app_mse_quad_long}
\end{equation}

Weighting by $w_{k,u}$ and summing over $(k,u)$ gives
$
\sum_{u=1}^{U}\sum_{k=1}^{K} w_{k,u}e_{k,u}(\mathbf{m})=
\Delta\mathbf{m}^T\mathbf{Q}^{(t)}\Delta\mathbf{m}-2\Re\left\{\big(\mathbf{q}^{(t)}\big)^H\Delta\mathbf{m}\right\}+\mathrm{const},
$
where 
$
\mathbf{Q}^{(t)}\triangleq\sum_{u=1}^{U}\sum_{k=1}^{K} w_{k,u}\mathbf{Q}_{k,u}^{(t)},
$
and 
$
\mathbf{q}^{(t)}\triangleq\sum_{u=1}^{U}\sum_{k=1}^{K} w_{k,u}\mathbf{q}_{k,u}^{(t)}.
$

Finally, $\mathbf{Q}^{(t)}\succeq \mathbf{0}$ follows from (for any $\mathbf{x}\in\mathbb{R}^N$)
\begin{equation}
\begin{aligned}
\mathbf{x}^T\Re\!\left(\mathbf{a}\mathbf{a}^H\right)\mathbf{x}
&=
\Re\!\left(\mathbf{x}^H\mathbf{a}\mathbf{a}^H\mathbf{x}\right)
=
|\mathbf{a}^H\mathbf{x}|^2
\ge 0 \\
&\Rightarrow\ 
\Re\!\left(\mathbf{a}\mathbf{a}^H\right)\succeq \mathbf{0}
\ \Rightarrow\ 
\mathbf{Q}^{(t)}\succeq \mathbf{0}.
\end{aligned}
\label{eq:app_Q_psd_long}
\end{equation}

\subsection{Convex quadratic RHS power constraint under the surrogate}

Let $\mathbf{V}_u=[\mathbf{v}_{1,u},\ldots,\mathbf{v}_{K,u}]$ and define
$\mathbf{X}_u^{(t)}\triangleq \mathbf{M}_u^{(t)}\mathbf{V}_u$,
$\mathbf{Y}_u^{(t)}\triangleq \mathbf{T}_u^{(t)}\mathbf{V}_u$.
Then
$\widetilde{\mathbf{M}}_u(\mathbf{m}|\mathbf{m}^{(t)})\mathbf{V}_u
=\mathbf{X}_u^{(t)}+\mathbf{C}_u^{(t)}\mathbf{D}(\Delta\mathbf{m})\mathbf{Y}_u^{(t)}$.
Using $\|\mathbf{Z}\|_F^2=\mathrm{Tr}(\mathbf{Z}\mathbf{Z}^H)$ yields
\begin{equation}
\begin{aligned}
\mathrm{Tr}\!\Big(\widetilde{\mathbf{M}}_u\mathbf{V}_u\mathbf{V}_u^H\widetilde{\mathbf{M}}_u^H\Big)
&=
\Big\|\mathbf{X}_u^{(t)}+\mathbf{C}_u^{(t)}\mathbf{D}(\Delta\mathbf{m})\mathbf{Y}_u^{(t)}\Big\|_{F}^{2} \\
&=
\|\mathbf{X}_u^{(t)}\|_{F}^{2}
+2\Re\!\left\{\mathrm{Tr}\!\Big((\mathbf{X}_u^{(t)})^H\mathbf{C}_u^{(t)}\mathbf{D}(\Delta\mathbf{m})\mathbf{Y}_u^{(t)}\Big)\right\} \\
&\quad+
\mathrm{Tr}\!\Big(\mathbf{Y}_u^{(t)H}\mathbf{D}(\Delta\mathbf{m})\mathbf{G}_u^{(t)}\mathbf{D}(\Delta\mathbf{m})\mathbf{Y}_u^{(t)}\Big),
\end{aligned}
\label{eq:app_power_expand_1}
\end{equation}
where
$\mathbf{G}_u^{(t)}\triangleq \mathbf{C}_u^{(t)H}\mathbf{C}_u^{(t)}\succeq\mathbf{0}$
and
$\mathbf{S}_u^{(t)}\triangleq \mathbf{Y}_u^{(t)}\mathbf{Y}_u^{(t)H}\succeq\mathbf{0}$.
\begin{equation}
\begin{aligned}
\mathrm{Tr}\!\Big(\mathbf{Y}_u^{(t)H}\mathbf{D}(\Delta\mathbf{m})\mathbf{G}_u^{(t)}
\mathbf{D}(\Delta\mathbf{m})\mathbf{Y}_u^{(t)}\Big)
&=
\mathrm{Tr}\!\Big(\mathbf{D}(\Delta\mathbf{m})\mathbf{G}_u^{(t)}
\mathbf{D}(\Delta\mathbf{m})\mathbf{S}_u^{(t)}\Big) \\
&=
\Delta\mathbf{m}^{T}
\Re\!\Big(\mathbf{G}_u^{(t)}\odot \mathbf{S}_u^{(t)T}\Big)
\Delta\mathbf{m} \\
&\triangleq
\Delta\mathbf{m}^{T}\mathbf{R}_u^{(t)}\Delta\mathbf{m}.
\end{aligned}
\label{eq:app_power_quad}
\end{equation}
\begin{equation}
\begin{aligned}
\mathrm{Tr}\!\Big((\mathbf{X}_u^{(t)})^H\mathbf{C}_u^{(t)}\mathbf{D}(\Delta\mathbf{m})\mathbf{Y}_u^{(t)}\Big)
&=
\mathrm{Tr}\!\Big(\mathbf{D}(\Delta\mathbf{m})
\mathbf{Y}_u^{(t)}(\mathbf{X}_u^{(t)})^H\mathbf{C}_u^{(t)}\Big) \\
&=
\Delta\mathbf{m}^T
\,\mathrm{diag}\!\Big(
\mathbf{Y}_u^{(t)}(\mathbf{X}_u^{(t)})^H\mathbf{C}_u^{(t)}
\Big).
\end{aligned}
\label{eq:app_power_linear}
\end{equation}

Define
$\mathbf{b}_u^{(t)}\triangleq \Re\!\Big(\mathrm{diag}(\mathbf{Y}_u^{(t)}(\mathbf{X}_u^{(t)})^H\mathbf{C}_u^{(t)})\Big)\in\mathbb{R}^N$
and $c_u^{(t)}\triangleq \|\mathbf{X}_u^{(t)}\|_F^2$.
Then the RHS power becomes a convex quadratic function of $\Delta\mathbf{m}$:
\begin{equation}
\begin{aligned}
\mathrm{Tr}\!\Big(\widetilde{\mathbf{M}}_u\mathbf{V}_u\mathbf{V}_u^H\widetilde{\mathbf{M}}_u^H\Big)
&=
\Delta\mathbf{m}^T\mathbf{R}_u^{(t)}\Delta\mathbf{m}
+2\,\mathbf{b}_u^{(t)T}\Delta\mathbf{m}
+c_u^{(t)},\\
\mathbf{R}_u^{(t)}
&\triangleq \Re\!\Big(\mathbf{G}_u^{(t)}\odot \mathbf{S}_u^{(t)T}\Big)\succeq \mathbf{0},\\
\mathbf{x}^T\mathbf{R}_u^{(t)}\mathbf{x}
&=
\mathrm{Tr}\!\Big(\mathbf{D}(\mathbf{x})\mathbf{G}_u^{(t)}\mathbf{D}(\mathbf{x})\mathbf{S}_u^{(t)}\Big)\ge 0, \forall \mathbf{x}\in\mathbb{R}^N,
\end{aligned}
\label{eq:app_power_quad_final_long}
\end{equation}

Therefore, under the surrogate $\widetilde{\mathbf{M}}_u$, the RHS power constraint is the convex QCQP constraint
\begin{equation}
\begin{aligned}
\eta\sum_{u=1}^{U}\Big(
\Delta\mathbf{m}^T\mathbf{R}_u^{(t)}\Delta\mathbf{m}
+2\,\mathbf{b}_u^{(t)T}\Delta\mathbf{m}
+c_u^{(t)}
\Big)
\le P_{\mathrm{RHS}}.
\end{aligned}
\label{eq:app_rhs_power_qc}
\end{equation}

%
%\section*{Biography Section}
%\vspace{-30pt}
%
%\begin{IEEEbiography}[{\includegraphics[width=1in,height=1.3in,clip,keepaspectratio]{./figures/wuliangshun.jpg}}]{Liangshun Wu}  (Member, IEEE) received his B.S. degree in Central South University, Changsha, China, in 2014,  M.S. and Ph.D. from Wuhan University, Wuhan, China, in 2017 and 2021. He was a Visiting Scholar at the University of Electro-Communications, Tokyo, Japan, in 2024. He is currently a postdoctoral researcher at Shanghai Jiao Tong University, Shanghai, China.  
%\end{IEEEbiography}
%%\vfill

\end{document}